\tikzstyle{filled vertex}=[fill=cyan, circle, thin, draw, inner sep=2pt, minimum width=3pt]
\tikzstyle{empty vertex}  = [{circle, draw, fill = white, inner sep=1.5pt}]
\tikzstyle{optional vertex}  = [{circle, densely dotted, draw=gray, fill = white, inner sep=1.5pt}]
\tikzstyle{normal edge}=[thick]
\newtheorem{theorem}{Theorem}[section]
\newtheorem{proposition}[theorem]{Proposition}
\newtheorem{lemma}[theorem]{Lemma}
\newcommand{\myref}{\hyperref[p1]{($\star$)}}
\title{On Fork-free T-perfect Graphs}
\author{Yixin Cao\thanks{Department of Computing, Hong Kong Polytechnic University, Hong Kong, China.  {\tt yixin.cao@polyu.edu.hk, shenghua.wang@connect.polyu.hk}.} 
  \and Shenghua Wang\footnotemark[1]
}
\date{}
\begin{document}

\maketitle

\begin{abstract}
  In an attempt to understanding the complexity of the independent set problem, Chv{\'a}tal defined t-perfect graphs.  While a full characterization of this class is still at large, progress has been achieved for claw-free graphs [Bruhn and Stein, Math.\ Program.\ 2012] and $P_{5}$-free graphs [Bruhn and Fuchs, SIAM J.\ Discrete Math.\ 2017].  We take one more step to characterize fork-free t-perfect graphs, and show that they are strongly t-perfect and three-colorable.  We also present polynomial-time algorithms for recognizing and coloring these graphs.
\end{abstract}

\section{Introduction}\label{sec:intro}

When Berge~\cite{berge-60-perfect-conjecture} introduced perfect graphs in the 1960s, the concept was purely graph-theoretic.  One of his conjectures, now known as the weak perfect graph theorem, was resolved with the help of polyhedral combinatorics~\cite{fulkerson-72-anti-blocking-polyhedra, lovasz-72-perfect-graphs}.  Padberg~\cite{padberg-74-perfect-matrices} and Chv{\'{a}}tal~\cite{chvatal-75-graph-polytopes} independently characterized the independent set polytope of a perfect graph, i.e., the convex hull of the characteristic vectors of all its independent sets, by showing that this polytope can be described by nonnegativity and clique constraints.
The strong perfect graph theorem~\cite{chudnovsky-06-strong-perfect-graphs-theorem}, another conjecture from the original paper of Berge on perfect graphs, asserts that a graph $G$ is perfect if and only if $G$ does not contain any odd hole (an induced odd cycle of length at least five) or the complement of an odd hole.
Naturally, we would like to have a similar characterization for t-perfect graphs.
Unfortunately, the progress toward this goal has been embarrassingly slow: even a working conjecture on it is still missing.

Both perfect graphs and t-perfect graphs were formulated to understand the complexity of the independent set problem, i.e., finding an independent set of the maximum weight.  Indeed, polynomial-time algorithms for the maximum independent set problem are known for perfect graphs~\cite{grotschel-86-relaxations-vertex-packing} and t-perfect graphs~\cite{eisenbrand-03-independent-set-t-perfect}.
There is another line of research on the independent set problem.  Algorithms for maximum matching can be used to solve the independent set problem on line graphs.  The idea can be generalized to quasi-line graphs and then to claw-free graphs~\cite{sbihi-80-independent-claw-free, minty-80-independent-claw-free}.  The independent set problem is simple on $P_3$-free and $P_4$-free graphs.  These results motivated the search of graphs $H$ such that the independent set problem can be solved in polynomial time when the input graph is $H$-free.   Note that there can be only a few such graphs $H$~\cite{alekseev-82-independent-set}.
Two subsequent additions are the fork~\cite{alekseev-04-classes-independent-set, Lozin08} and the $P_5$~\cite{lokshtanov-14-independent-p5-free}.  While the $P_5$ is a natural generalization of the $P_4$, the fork graph (also called the chair graph) is a generalization of both the claw and the $P_4$ (Figure~\ref{fig:claw-and-fork}).
We get a fork by attaching a private neighbor to a degree-one vertex of a claw, or a degree-two vertex of a $P_4$.

\begin{figure}[h]
  \centering
  \begin{subfigure}[b]{0.25\linewidth}
    \centering
    \begin{tikzpicture}
      \foreach \i in {1,..., 3} {
        \node[empty vertex] (v\i) at ({270 - \i * (360 / 4)}:1) {};
      }
      \node[empty vertex] (c) at (0:0) {};
      \foreach \i in {1,..., 3} {
        \draw (c) -- (v\i);
      }
    \end{tikzpicture}
    \caption{}
  \end{subfigure}
  \;
  \begin{subfigure}[b]{0.25\linewidth}
    \centering
    \begin{tikzpicture}[scale=1.]
      \foreach \i in {1,..., 3} {
        \node[empty vertex] (v\i) at (\i, 0) {};
      }
      \node[empty vertex] (v4) at (0, -0.5) {};
      \node[empty vertex] (v5) at (0, 0.5) {};
      \draw (v3) -- (v2) -- (v1) -- (v4) (v1) -- (v5);
    \end{tikzpicture}
    \caption{}
  \end{subfigure}
  \caption{(a) The claw graph and (b) the fork graph.}
  \label{fig:claw-and-fork}
\end{figure}
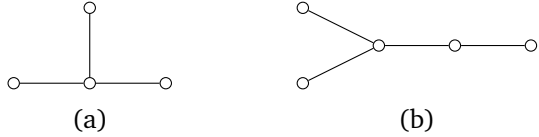

Since we know when a claw-free graph~\cite{Bruhn12} or a $P_{5}$-free graph~\cite{Bruhn17} is t-perfect, one naturally wants to ask the same question on fork-free graphs.
The main result of the present paper is the list of forbidden t-minors (definition deferred to Section~\ref{sec:preliminaries}) of fork-free t-perfect graphs in Figure~\ref{fig:forbidden-t-minors}.
Actually, we obtain a stronger result.  A graph is \emph{strongly t-perfect} if Chv{\'a}tal's system on the graph is totally dual integral.
Note that every integral solution of Chv{\'a}tal's system corresponds to an independent set of the graph.  Thus,
it follows from the observation of Edmonds and Giles~\cite{Edmonds75} on total dual integrality that every strongly t-perfect graph is t-perfect.  The other direction remains an open problem.
In particular, we do not know whether all $P_5$-free t-perfect graphs are strongly t-perfect, though it is true for all claw-free t-perfect graphs~\cite{Bruhn10}.

\begin{figure}[h]
  \centering
  \tikzset{every path/.style={normal edge}, every node/.style={filled vertex}}
  \begin{subfigure}[b]{2.5cm}
    \centering
    \begin{tikzpicture}
      \foreach \i in {1,..., 7} {
        \draw ({(90 - 360 / 7) - \i * (360 / 7)}:1) -- ({90 - \i * (360 / 7)}:1);
      }
      \foreach \i in {1,..., 7} {
        \draw ({90 - \i * (360 / 7)}:1) -- ({(90 - 360 / 7) - (\i + 1) * (360 / 7)}:1);
      }
      \foreach \i in {1,..., 7} {
        \node[empty vertex] at ({(90 - 360 / 7) - \i * (360 / 7)}:1) {};
      }
    \end{tikzpicture}
    \caption{$C_{7}^{2}$}
  \end{subfigure}
  \begin{subfigure}[b]{2.5cm}
    \centering
    \begin{tikzpicture}
      \foreach \i in {1,..., 10} {
        \draw ({\i * (360 / 10)}:1) -- ({36 + \i * (360 / 10)}:1);
      }
      \foreach \i in {1,..., 5} {
        \draw ({\i * (360 / 5)}:1) -- ({72 + \i * (360 / 5)}:1);
      }
      \foreach \i in {1,..., 5} {
        \draw ({36 + \i * (360 / 5)}:1) -- ({-36 + \i * (360 / 5)}:1);
      }
      \foreach \i in {1,..., 5} {
        \node[empty vertex] (v\i) at ({144 + \i * (720 / 5)}:1) {};
      }
      \foreach \i in {1,..., 5} {
        \node[empty vertex] (v\i) at ({-36 + \i * (720 / 5)}:1) {};
      }
    \end{tikzpicture}
    \caption{$C_{10}^{2}$}
  \end{subfigure}
  \begin{subfigure}[b]{5cm}
    \centering
    \begin{tikzpicture}
      \foreach \i in {1,..., 3} {
        \draw ({90 - \i * (360 / 3)}:1) -- ({210 - \i * (360 / 3)}:1);
      }
      \foreach \i in {1,..., 3} {
        \draw ({90 - \i * (360 / 3)}:1) -- (0:0);
      }
      \foreach \i in {1,..., 3} {
        \node[empty vertex] at ({90 - \i * (360 / 3)}:1) {};
      }
      \node[empty vertex] at (0:0) {};
    \end{tikzpicture}
    \,
    \begin{tikzpicture}
      \foreach \i in {1,..., 5} {
        \draw ({18 - \i * (360 / 5)}:1) -- ({90 - \i * (360 / 5)}:1);
      }
      \foreach \i in {1,..., 5} {
        \draw ({18 - \i * (360 / 5)}:1) -- (0:0);
      }
      \foreach \i in {1,..., 5} {
        \node[empty vertex] at ({18 - \i * (360 / 5)}:1) {};
      }
      \node[empty vertex] at (0:0) {};
    \end{tikzpicture}
    $\cdots$
    \caption {$W_{2k+1}$ (odd wheels)}
  \end{subfigure}
  \caption{Some minimally t-imperfect graph.
  }
  \label{fig:forbidden-t-minors}
\end{figure}
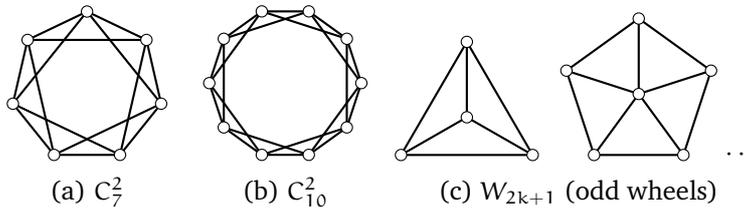

\begin{theorem}
  \label{thm:main}
  Let $G$ be a fork-free graph.  The following statements are equivalent:
  \begin{enumerate}[i)]
  \item $G$ is t-perfect.
  \item $G$ is strongly t-perfect.
  \item $G$ does not contain a $C_{7}^{2}$, a $C_{10}^{2}$, or any odd wheel as a t-minor.
  \end{enumerate}
\end{theorem}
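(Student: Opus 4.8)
The plan is to prove the equivalence as the cycle $(ii)\Rightarrow(i)\Rightarrow(iii)\Rightarrow(ii)$, of which only the last implication is substantial. The step $(ii)\Rightarrow(i)$ is immediate from the observation of Edmonds and Giles~\cite{Edmonds75} already quoted in the introduction: total dual integrality of Chv\'atal's system forces its associated polytope to be integral, so every strongly t-perfect graph is t-perfect. For $(i)\Rightarrow(iii)$ I would argue by contraposition. Since t-perfection is inherited by t-minors, it suffices to know that each graph in Figure~\ref{fig:forbidden-t-minors} is itself t-imperfect; any graph that contains one of them as a t-minor is then t-imperfect as well. Hence the entire burden of the theorem is concentrated in $(iii)\Rightarrow(ii)$, and closing this cycle yields all three equivalences.

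Before turning to the hard direction I would record that the three families are t-imperfect, since $(i)\Rightarrow(iii)$ rests on exactly this (minimality, asserted by the figure, is not logically needed for the implication, only for calling them obstructions). For the odd wheels $W_{2k+1}$ and for $C_7^2$ this is classical. For $C_{10}^2$ I would exhibit an explicit weight vector for which the optimum of Chv\'atal's linear program strictly exceeds the maximum weight of an independent set, equivalently point to a facet of the independent-set polytope that is neither a nonnegativity, edge, nor odd-cycle inequality; this is a short finite check. Minimality, if recorded, follows by verifying that each single-vertex deletion and each t-contraction produces a t-perfect graph.

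The core is $(iii)\Rightarrow(ii)$, and here I would split on whether $G$ contains a claw. If $G$ is claw-free, it is already governed by the claw-free theory: by~\cite{Bruhn12} a claw-free graph avoiding $C_7^2$, $C_{10}^2$, and every odd wheel as a t-minor is t-perfect, and by~\cite{Bruhn10} a claw-free t-perfect graph is strongly t-perfect, which gives $(ii)$ at once. The genuinely new case is that $G$ contains a claw, and here fork-freeness is the main lever. If $c$ is the centre of a claw with leaves $a,b,d$, then any vertex $x\notin N[c]$ adjacent to exactly one leaf would, together with $\{c,a,b,d\}$, induce a fork; hence every such $x$ must attach to at least two leaves, and more generally each vertex meets the claw in one of only a few admissible patterns. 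The plan is to convert these patterns into a decomposition of $G$ -- a clique cutset, a homogeneous set, or a small reducible configuration surrounding the claw -- to which induction applies, exploiting that both membership in class $(iii)$ and strong t-perfection are passed to the pieces and recombine across the cut.

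I expect this decomposition step to be the main obstacle, for two reasons that must be handled together. First, the case analysis around a claw in a fork-free graph has to be pushed far enough to show that, once $C_7^2$, $C_{10}^2$, and the odd wheels are forbidden as t-minors, no further obstruction can be concealed near a claw; that these three families still suffice in the presence of claws is precisely the jump beyond the claw-free setting. Second, the decomposition must preserve strong t-perfection rather than only t-perfection, so I would invoke composition lemmas asserting that total dual integrality of Chv\'atal's system survives gluing along the relevant small separators, with the base cases supplied by the strongly t-perfect claw-free graphs above together with manifestly strongly t-perfect classes such as bipartite and series-parallel graphs. As a by-product, the same structural description should exhibit a proper three-colouring, matching the claim in the abstract.
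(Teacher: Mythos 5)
Your skeleton of easy implications matches the paper exactly: (ii)$\Rightarrow$(i) by Edmonds--Giles, (i)$\Rightarrow$(iii) by contraposition using closure of t-perfection under t-minors plus t-imperfection of $C_7^2$, $C_{10}^2$, and the odd wheels, and the claw-free case of (iii)$\Rightarrow$(ii) via Bruhn--Stein. But the entire substance of the theorem --- (iii)$\Rightarrow$(ii) when $G$ contains a claw --- is left as a plan in your write-up, not a proof, and the plan you sketch is not one that is known to work. You propose to find a clique cutset, homogeneous set, or ``small reducible configuration'' near a claw and to recombine strong t-perfection across the cut via ``composition lemmas'' for total dual integrality. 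You neither prove that every fork-free graph with a claw in class (iii) admits such a decomposition, nor supply (or cite) the gluing lemmas; for strong t-perfection such TDI-preserving composition results are delicate and are precisely the kind of thing that cannot be waved at. Your base cases are also wrong in one respect: a graph in class (iii) containing a claw may contain no odd hole at all, in which case it is perfect and $K_4$-free (Proposition~\ref{prop:perfect and t-perfect} is what the paper invokes); bipartite, series-parallel, and claw-free graphs do not cover this case.

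The paper's actual route is quite different and does not decompose the graph. It first proves (Propositions~\ref{prop:neighbors-on-hole}--\ref{lem:fix-longer-hole}) that in a fork-free graph of class (iii) containing a claw, every odd hole is a dominating $C_5$ and every outside vertex attaches to it with exactly two consecutive or three nonconsecutive neighbors, the property \myref{}. It then partitions $V(G)\setminus V(H)$ into the sets $U_i^+$, $U_i^-$ around a fixed five-hole $H$ and establishes strong t-perfection by two mechanisms: when the nonempty $U_i^+$'s are sparse or forced to be pairwise complete, $G$ is obtained from one of finitely many small graphs by vertex duplication, which preserves strong t-perfection (Lemma~\ref{lem:substitution}, Benchetrit); in the remaining configuration it runs an LP-duality argument (Propositions~\ref{prop:strong-t-perfection} and \ref{proposition:clique-intersects-mwis}), enumerating all maximal independent sets explicitly (Propositions~\ref{prop:independent-set-2} and \ref{prop:refined-independent-sets}) and exhibiting, in every case, a clique of at most three vertices meeting every maximum-weight independent set, which yields the required $w$-cover by induction on the total weight. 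Your local observation about vertices seeing exactly one leaf of a claw is a correct first step of fork-free analysis, but it is several lemmas short of either the structural classification or the $w$-cover construction; as it stands, the hard direction is missing.
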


Since strong t-perfection implies t-perfection and all the graphs in Figure~\ref{fig:forbidden-t-minors} are known to be t-imperfect~\cite{Bruhn12, schrijver-03}, we focus on showing that iii) implies ii).  To demonstrate strong t-perfection, we use the duality of linear programming: for any weighting of the vertex set, there is a certain cover of the vertex set whose cost equals the maximum weight of independent sets.  Let $G$ be a fork-free graph that does not contain a $C_{7}^{2}$, a $C_{10}^{2}$, or any odd wheel as a t-minor.
The graph must be strongly t-perfect if it is also claw-free~\cite{Bruhn12}, or if it is perfect (note that $W_{3}$ is $K_{4}$)~\cite{chvatal-75-graph-polytopes, lovasz-72-perfect-graphs}.
Moreover, if $G$ is not perfect, then the nonexistence of $K_4$'s and $C^2_7$'s forces it to contain an odd hole.
We may hence assume without loss of generality that $G$ contains a claw and an odd hole.
We show that every odd hole $H$ must be a $C_5$, and every other vertex has either exactly two consecutive neighbors or three nonconsecutive neighbors on $H$.  Based on the adjacency to vertices on $H$, we can partition $V(G)\setminus V(H)$ into a few sets.  A careful inspection of the edges among them shows that there always exists a cover.  Therefore, the graph is strongly t-perfect.  Our structural study toward Theorem~\ref{thm:main} enables us to develop polynomial-time algorithms for recognizing and coloring fork-free t-perfect graphs.

\begin{theorem}
  \label{thm:fork-free-recognize}
  Given a fork-free graph, we can decide in polynomial time whether it is t-perfect.
\end{theorem}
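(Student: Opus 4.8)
The plan is to turn Theorem~\ref{thm:main} into a decision procedure. By the equivalence of i) and iii), a fork-free graph $G$ is t-perfect if and only if it contains none of $C_7^2$, $C_{10}^2$, and the odd wheels as a t-minor, so it suffices to test in polynomial time for the presence of these obstructions. Rather than searching for t-minors directly, I would follow the case analysis underlying the proof that iii) implies ii), which in fact pins down the structure of every fork-free t-perfect graph. In each case that analysis isolates a finite list of locally checkable conditions equivalent to t-perfection, and the algorithm simply verifies them.

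First I would test whether $G$ is claw-free, which takes polynomial time by inspecting each neighbourhood for an independent triple. If $G$ is claw-free, the problem is exactly recognizing a claw-free t-perfect graph, for which polynomial-time recognition is already known~\cite{Bruhn12}. Otherwise I would fix a claw and test whether $G$ is perfect, again in polynomial time. If $G$ is perfect, then since a perfect graph is t-perfect precisely when it has no $K_4$ (note that $W_3 = K_4$ is the only obstruction in Figure~\ref{fig:forbidden-t-minors} that is itself perfect), the answer is obtained by checking for a $K_4$, which is polynomial.

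This leaves the main case: $G$ contains a claw and is imperfect, so it contains an odd hole. Here I would exploit the structural consequences established for Theorem~\ref{thm:main}: in a t-perfect graph of this kind every odd hole is a $C_5$, and relative to any such $C_5$ every remaining vertex has either two consecutive neighbours or three pairwise-nonconsecutive neighbours on it. The procedure locates an induced $C_5$ by enumerating five-vertex subsets, classifies every other vertex by its attachment to this $C_5$, and verifies that each attachment is of one of the permitted types and that the edges among the resulting classes conform to the decomposition found in the proof. If every check passes, the cover constructed in the proof certifies (strong) t-perfection; if any check fails, the same arguments extract one of the forbidden t-minors, certifying t-imperfection.

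The hard part will be the odd-wheel family: it is infinite, so one cannot test t-minor containment member by member, and since t-contractions may be applied in many orders, even the detection of the two fixed graphs $C_7^2$ and $C_{10}^2$ as t-minors is not obvious. The resolution I would pursue is to prove that the $C_5$-based decomposition localizes all obstructions, so that any forbidden t-minor of $G$ already surfaces inside the bounded configuration formed by a $C_5$ together with its attachment classes and is therefore caught by a finite local test; equivalently, one must rule out long odd holes and large odd wheels using fork-freeness and the attachment constraints rather than by unbounded search. Establishing this localization — that a global forbidden t-minor always forces a polynomially detectable local witness — is the technical heart of the algorithm, and I expect most of the effort to concentrate there.
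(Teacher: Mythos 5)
Your algorithm has the same skeleton as the paper's: dispatch the claw-free case to the known recognition algorithm (note that this is Bruhn--Schaudt~\cite{Bruhn16}, not~\cite{Bruhn12}, which characterizes but does not recognize), dispatch the perfect case via $K_4$-detection, and in the remaining case (claw plus odd hole) verify the $C_5$-attachment structure of Lemma~\ref{lem:main}(ii). But there is a genuine gap, and it sits exactly where you place it yourself: you never establish the ``localization'' claim, and without it what you describe is not yet an algorithm. Lemma~\ref{lem:main}(ii) requires that \emph{every} odd hole of $G$ has length five and satisfies \myref{}. Checking one $C_5$ and its attachment classes --- or even checking \myref{} for all $C_5$'s --- does not by itself exclude an induced $C_7$, $C_9$, \dots{} elsewhere in $G$, and a priori odd holes can be arbitrarily long, so there is no finite family of local tests to run until one \emph{proves} that a bound exists. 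Deferring this as ``the technical heart'' concedes precisely the step that makes the theorem true.

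The paper closes this gap with Proposition~\ref{lem:bound}, a short pigeonhole argument: if $G$ is $\{K_4,\mathrm{fork}\}$-free and contains a five-hole $H$ satisfying \myref{}, then an odd hole $H'$ with $|H'|\geq 21$ would have at least $|H'|-4$ vertices outside $V(H)$, hence at least four of them in one class $U_i$; but any vertex of $H'$ outside $V(H)\cup U_i$ has at most one non-neighbor in $U_i\cap V(H')$ by Proposition~\ref{prop:fork-free-results}, giving it at least three neighbors on $H'$, which is impossible on a hole. Thus all odd holes have length at most $19$, and the algorithm becomes finite: enumerate all five-holes and test \myref{} on each, then search for odd holes of length $7$ through $19$ over constant-size vertex subsets; if none is found, Lemma~\ref{lem:main}(ii) holds and the graph is (strongly) t-perfect. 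So your plan is the right one, and your perfect-case reasoning (a perfect graph is t-perfect if and only if it is $K_4$-free, by Proposition~\ref{prop:perfect and t-perfect} and the fact that t-perfection is closed under vertex deletion) is sound, but the step you left open is the actual content of the recognition result rather than a routine verification, so the proposal as it stands is incomplete.
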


It is conjectured that every t-perfect graph is four-colorable~\cite{Jensen95, schrijver-03}.  We show that three colors already suffice for a fork-free t-perfect graph.

\begin{theorem}
  \label{thm:fork-free-coloring}
  Let $G$ be a fork-free graph.  If $G$ is t-perfect, then the chromatic number of $G$ is at most three, and an optimal coloring can be found in polynomial time.
\end{theorem}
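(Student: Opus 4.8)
The plan is to reduce the coloring problem to the same case analysis that underlies Theorem~\ref{thm:main} and then construct a proper $3$-coloring by hand, exploiting that the excluded t-minors — in particular $K_4=W_3$ and the odd wheels — are exactly the local obstructions one must avoid. First I would reduce to connected graphs, since components can be colored independently, and I would settle optimality cheaply: testing whether $G$ is bipartite takes polynomial time, and if $G$ is bipartite then $\chi(G)\le 2$ and we output a $2$-coloring. Otherwise $G$ contains an odd cycle, so $\chi(G)\ge 3$, and it suffices to exhibit \emph{any} proper $3$-coloring. Hence the whole content reduces to producing a $3$-coloring of a connected, non-bipartite, fork-free t-perfect graph.

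Next I would dispose of the two base cases. If $G$ is perfect, then because $W_3=K_4$ is a forbidden t-minor, $G$ has no $K_4$, so $\omega(G)\le 3$ and thus $\chi(G)=\omega(G)\le 3$; an optimal coloring of a perfect graph is computable in polynomial time. If $G$ is claw-free, then the structure theory for claw-free t-perfect graphs~\cite{Bruhn12} yields a $3$-coloring, again constructively. As in the paragraph following Theorem~\ref{thm:main}, if $G$ is neither perfect nor claw-free, then the exclusion of $K_4$ and $C_7^2$ forces an odd hole, and the structural analysis shows every odd hole is a $C_5$. I may therefore assume $G$ contains a claw and a $C_5$.

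In this main case I would reuse the partition of $V(G)\setminus V(H)$ established for Theorem~\ref{thm:main}, where $H=v_1v_2v_3v_4v_5$ is a fixed $C_5$, and start from its canonical $3$-coloring $1,2,1,2,3$. A vertex with exactly two consecutive neighbors on $H$ sees only two colors there, so it always retains a free color; the vertices with three neighbors on $H$ are the delicate ones, since their $H$-neighborhood may already exhaust all three colors. I would color the parts in an order dictated by the partition, at each step assigning a vertex a color avoided by its already-colored neighbors, and then argue that such a color always exists. The key leverage is that the constraints on how a part attaches to $H$ are strong enough to forbid the configurations that demand a fourth color; for instance, a vertex adjacent to all of $H$ would be the hub of a $W_5$, which is excluded.

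I expect the main obstacle to be the \emph{global} consistency of this extension for the three-neighbor parts: locally a vertex could be blocked on all three colors by the combination of its neighbors on $H$ and its neighbors in earlier parts. Ruling this out requires the full edge analysis from the proof of Theorem~\ref{thm:main} — which parts can be adjacent and under which attachment patterns — together with the absence of odd wheels, $K_4$, and $C_{10}^2$; in effect one must show that the ``conflict graph'' induced on the parts is itself $3$-colorable, which is where the finer adjacency restrictions (beyond mere $K_4$- and odd-wheel-freeness) are indispensable. Finally, each ingredient — bipartiteness testing, coloring a perfect or claw-free graph, locating a $C_5$, computing the partition, and assigning colors part by part — runs in polynomial time, which establishes the algorithmic claim.
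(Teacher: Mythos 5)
Your reductions (bipartiteness test for optimality, the perfect case via $\omega\le 3$, the claw-free case via Bruhn--Stein) match the paper's Lemma~\ref{lem:optimal-coloring}, but in the main case --- $G$ containing a claw and a $C_5$ --- your proposal stops exactly where the real work begins, and you say so yourself: you defer to ``the full edge analysis'' the claim that a free color always exists, i.e.\ that the conflict structure on the parts is $3$-colorable. That claim is the entire content of the theorem in this case, so as written this is a genuine gap, not a proof. Moreover, your concrete starting point fails: with the canonical coloring $1,2,1,2,3$ on $v_1,\dots,v_5$, any vertex of $U_5^+$ (adjacent to $v_2$, $v_3$, and $v_5$) already sees all three colors, so a greedy extension from a \emph{fixed} coloring of $H$ cannot succeed in general; no ordering of the remaining vertices rescues it, because the blockage occurs at the very first vertex of such a part.

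What the paper does instead is not a vertex-by-vertex extension but a global renumbering followed by an explicit partition into three independent sets, each consisting of one \emph{entire} part $U_i$ together with its non-neighbors on $H$. The key steps you are missing are: (1) using Proposition~\ref{prop:fork-free-results}(\ref{lb:fork-free-v}) to renumber $H$ so that $U_1^+\neq\emptyset$ and $U_4=\emptyset$, which limits which three-neighbor parts can coexist; (2) a short case analysis ($U_3=\emptyset$, or $U_5=\emptyset$, or $U_5^+\neq\emptyset$ forcing $U_2=\emptyset$), in each of which three explicit color classes of the form $U_i\cup(V(H)\setminus N(U_i))$ cover $V(G)$ --- note this sidesteps the uncertain adjacencies between consecutive $U_i^+$'s entirely, since distinct parts never share a color class; and (3) a structural impossibility argument for the residual configuration ($U_1^+$, $U_3$, $U_5=U_5^-$ all nonempty), which is ruled out by producing either a fork or a vertex with three consecutive neighbors on a five-hole, contradicting \myref{}. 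Without ingredients (1)--(3), or some equivalent, your plan does not yield a $3$-coloring.
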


We would like to ask for characterizations of $P_6$-free t-perfect graphs and $E$-free t-perfect graphs, where the $E$ graph is obtained by attaching a private neighbor to the middle vertex of a $P_5$.  The claw graph, the fork graph, and the $E$ graph are also known as the $S_{1, 1, 1}$ graph, the $S_{1, 1, 2}$ graph, and the $S_{1, 2, 2}$ graph, respectively.
Very recently, Grzesik et al.~\cite{grzesik-22-independent-p6-free} presented a polynomial-time algorithm for the independent set problem on $P_6$-free graphs.
Finding a polynomial-time algorithm for the independent set problem on $E$-free graphs is a major open problem.  Characterizing t-perfect graphs inside this class may shed some light on its solution.

\section{Graphs with no fork, $C_{7}^{2}$, $C_{10}^{2}$, odd wheel as a t-minor}
\label{sec:preliminaries}
All the graphs discussed in this paper are finite and simple. We use $V(G)$ to denote the vertex set of a graph $G$.
The set of \emph{neighbors} of $u$ is denoted as $N(u)$, whose cardinality is the \emph{degree} of $u$. 
For a proper subset $U$ of $V(G)$, we use $G-U$ to denote the graph obtained from $G$ by deleting all vertices in $U$ and all edges that have an end in $U$.  We use $G-u$ as a shorthand for $G-\{u\}$.
We say that $G$ \emph{contains} another graph $F$, or $F$ is an \emph{induced subgraph} of $G$, if $F$ can be obtained from $G$ by deleting vertices; otherwise, $G$ is \emph{$F$-free}.
For a set $\mathcal{F}$ of graphs, a graph $G$ is \emph{$\mathcal{F}$-free} if $G$ is {$F$-free} for every $F\in \mathcal{F}$.
The \emph{complement} of a graph $G$ has the same vertex set as $G$ and two vertices are adjacent if and only if they are not adjacent in $G$.  A \emph{clique} is a set of pairwise adjacent vertices, and an \emph{independent set} is a set of vertices that are pairwise nonadjacent.

For $\ell \geq 3$, we use $C_{\ell}$, $K_{\ell}$, and $P_{\ell}$ to denote the cycle graph, the complete graph, and the path graph, respectively, on $\ell$ vertices.
We use $W_{\ell}$ to denote the wheel graph, which is obtained from a $C_{\ell}$ by adding a new vertex and making it adjacent to all the vertices on the $C_{\ell}$.  Note that a $W_{3}$ is a $K_{4}$. A \emph{hole} is an induced cycle with $\ell \geq 4$. An $\ell$-cycle, $\ell$-hole, or $\ell$-wheel is odd if $\ell$ is odd; note that an odd wheel has an even number of vertices.
When we talk about the indices of vertices on an $\ell$-hole, they are always understood as modulo $\ell$.

Let $v$ be a vertex in $G$.  If $N(v)$ is an independent set, then a \emph{t-contraction} on $v$ is the operation of contracting $N(v) \cup \{v\}$ into a single vertex~\cite{gerards1998}.
See Figure~\ref{fig:t-contraction} for examples.
A graph $G'$ is a \emph{t-minor} of $G$ if $G'$ can be obtained from an induced subgraph of $G$ by a sequence of t-contractions.
Trivially, every induced subgraph of $G$ is a t-minor of $G$.
We focus on the family of fork-free graphs that does not contain $C_{7}^{2}$, $C_{10}^{2}$, or any odd wheel as a t-minor.  Since any t-minor of a fork-free graph is fork-free (the proof is omitted because it is not used in the present paper), it is precisely the family of graphs forbidding fork, $C_{7}^{2}$, $C_{10}^{2}$, and odd wheels as t-minors.

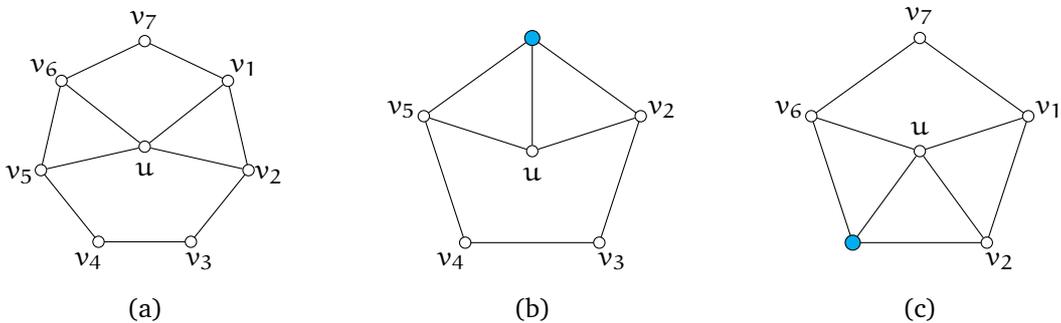
\begin{figure}[h]
  \centering
  \begin{subfigure}[b]{0.3\linewidth}
    \centering
    \begin{tikzpicture}[scale=1.4]
      \foreach \i in {0,..., 6} {
        \draw[thin] ({90 - (\i + 1) * (360 / 7)}:1) -- ({90 - \i * (360 / 7)}:1);
      }
      \foreach \i in {1,..., 7} {
        \node[empty vertex] (v\i) at ({90 - \i * (360 / 7)}:1) {};
        \node at ({90 - \i * (360 / 7)}:1.2) {$v_{\i}$};
      }
      \node[empty vertex, "$u$" below] (u) at (0, 0) {};
      \foreach \i in {1, 2, 5, 6} {
        \draw (u) -- (v\i);
      }
    \end{tikzpicture}
    \caption{}
  \end{subfigure}
  \;
  \begin{subfigure}[b]{0.3\linewidth}
    \centering
    \begin{tikzpicture}[scale=1.5]
      \foreach \i in {0,..., 4} {
        \draw[thin] ({90 - (\i + 1) * (360 / 5)}:1) -- ({90 - \i * (360 / 5)}:1);
      }
      \foreach \i in {2,..., 5} {
        \node[empty vertex] (v\i) at ({162 - \i * (360 / 5)}:1) {};
        \node at ({162 - \i * (360 / 5)}:1.2) {$v_{\i}$};
      }
      \node[filled vertex] (v1) at (90:1) {};
      \node[empty vertex, "$u$" below] (u) at (0, 0) {};
      \foreach \i in {1, 2, 5} {
        \draw (u) -- (v\i);
      }
    \end{tikzpicture}
    \caption{}
  \end{subfigure}
  \;
  \begin{subfigure}[b]{0.3\linewidth}
    \centering
    \begin{tikzpicture}[scale=1.5]
      \foreach \i in {0,..., 4} {
        \draw[thin] ({90 - (\i + 1) * (360 / 5)}:1) -- ({90 - \i * (360 / 5)}:1);
      }
      \foreach[count=\j] \i in {6, 7, 1, 2} {
        \node[empty vertex] (v\i) at ({234 - \j * (360 / 5)}:1) {};
        \node at ({234 - \j * (360 / 5)}:1.2) {$v_{\i}$};
      }
      \node[filled vertex] (v5) at (-126:1) {};
      \node[empty vertex, "$u$"] (u) at (0, 0) {};
      \foreach \i in {1, 2, 5, 6} {
        \draw (u) -- (v\i);
      }
    \end{tikzpicture}
    \caption{}
  \end{subfigure}
  \caption{A t-contraction on $v_7$ or $v_4$ in (a) leads to the graphs in (b) and (c), respectively, where the shadow vertices are the newly created.  A further t-contraction on a degree-two vertex in (b) or (c) ends with $K_4$.}
  \label{fig:t-contraction}
\end{figure}

As mentioned in Section~\ref{sec:intro}, we assume that the graph contains an odd hole and a claw.  This section is devoted to a structural study of such graphs.
Note that a fork-free graph that contains an odd hole but does not contain $C_{7}^{2}$, $C_{10}^{2}$, or any odd wheel as a t-minor satisfies conditions of all the statements.  We use weaker conditions when the statement may be of independent interest.
The first observation is about the neighborhood of an outside vertex on an odd hole.

\begin{proposition}
  \label{prop:neighbors-on-hole}
  Let $G$ be a graph containing an odd hole $H$ and $u$ a vertex in $V(G) \setminus V(H)$.
  \begin{enumerate}[i)]
  \item If $u$ has exactly one neighbor on $H$, then $G$ contains a fork.
  \item If $u$ has exactly two neighbors on $H$ and they are not consecutive on $H$, then $G$ contains a fork.
  \item If $u$ has exactly three neighbors on $H$ and they are consecutive on $H$, then $K_4$ is a t-minor of $G$.
  \item If $u$ has exactly four neighbors on $H$ and they form one or two paths on $H$, then $K_4$ is a t-minor of $G$.
  \end{enumerate}
\end{proposition}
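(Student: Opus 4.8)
The plan is to handle the four parts separately: parts i) and ii) by exhibiting an induced fork, and parts iii) and iv) by collapsing arcs of the hole with t-contractions. Write $H = v_1 v_2 \cdots v_\ell$ with $\ell \ge 5$ odd and indices taken modulo $\ell$; since $H$ is induced it has no chords, and in each part the neighbors of $u$ listed in the hypothesis are exactly $N(u) \cap V(H)$. I would work inside the induced subgraph $G[\{u\} \cup V(H)]$, whose only edges are those of $H$ together with the edges from $u$ to its neighbors. For i), if $v_a$ is the unique neighbor of $u$ on $H$, the set $\{v_a, u, v_{a-1}, v_{a+1}, v_{a+2}\}$ induces a fork with center $v_a$, short legs $v_a u$ and $v_a v_{a-1}$, and long leg $v_a v_{a+1} v_{a+2}$: the edges to forbid are $u v_{a-1}, u v_{a+1}, u v_{a+2}$ (ruled out since $v_a$ is the only neighbor of $u$) and the chords $v_{a-1} v_{a+1}, v_{a-1} v_{a+2}$ (ruled out since $H$ is chordless with $\ell\ge 5$). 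For ii), let the two non-consecutive neighbors be $v_a, v_b$. The same fork on $\{v_a, u, v_{a-1}, v_{a+1}, v_{a+2}\}$ is induced unless $b = a+2$, while its mirror on $\{v_a, u, v_{a+1}, v_{a-1}, v_{a-2}\}$ is induced unless $b = a-2$; that $b \neq a \pm 1$ keeps the short legs clean, and since $a+2 \not\equiv a-2 \pmod \ell$ at most one of the two can fail, so one of them is an induced fork.

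Parts iii) and iv) rest on one tool: a chordless path on an odd number of vertices, whose interior vertices have no neighbors outside the path, can be collapsed to a single vertex by repeatedly t-contracting an interior vertex (each interior vertex has two non-adjacent neighbors, so its closed neighborhood may be contracted); the resulting vertex is adjacent precisely to the off-path neighbors of the two endpoints. For iii), $u$ is adjacent to the three consecutive vertices $v_{a-1}, v_a, v_{a+1}$, and I would apply the tool to the complementary arc $A = v_{a+1} v_{a+2} \cdots v_{a-2}$, which has $\ell-2$ (odd) vertices, whose interior misses $N(u)$, and whose endpoints satisfy $v_{a+1} \sim u, v_a$ and $v_{a-2} \sim v_{a-1}$. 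Collapsing $A$ to a vertex $w$ leaves $\{u, v_{a-1}, v_a, w\}$, and checking the six pairs — $u v_{a-1}, u v_a, v_{a-1} v_a$ from $H$; $u w$ and $v_a w$ through $v_{a+1}\in A$; and $v_{a-1} w$ through $v_{a-2}\in A$ — shows it is a $K_4$, so $K_4$ is a t-minor.

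For iv), the four neighbors form one or two maximal runs on $H$, so $V(H)$ splits into these runs and into either one gap (four consecutive neighbors) or two gaps; the gaps hold $\ell - 4$ vertices in total, an odd number, so some gap $Q$ is odd. I would collapse $Q$ together with its two flanking neighbors of $u$ — a chordless path on $|Q|+2$ (odd) vertices whose interior lies in $Q$ and hence misses $N(u)$ — to a single vertex $w \sim u$. This replaces two neighbors of $u$ by the one new neighbor $w$, leaving $u$ with exactly three neighbors. If there was a single gap (run $v_a v_{a+1} v_{a+2} v_{a+3}$), then the surviving hole-vertices $w, v_{a+1}, v_{a+2}$ form a triangle, so $\{u, w, v_{a+1}, v_{a+2}\}$ is already a $K_4$; if there were two gaps, the residual is a shorter odd hole on which $u$ has three consecutive neighbors, and iii) finishes the job.

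The main obstacle, where I would spend the most care, is the bookkeeping behind the collapses in iii) and iv): confirming that every t-contraction is legitimate — that the contracted vertex genuinely has an independent neighborhood at that moment, which relies on chordlessness of $H$ and on the interior arc/gap vertices avoiding $N(u)$ — and that the single resulting vertex $w$ has exactly the intended neighbors, in particular that $w \sim u$ (since both endpoints of the collapsed path lie in $N(u)$) while the two surviving neighbors of $u$ end up consecutive on the residual hole, so that iii) really does apply and the residual cycle stays chordless (one verifies $w$ has exactly two neighbors on it). The parity counts — the odd arc in iii), the odd gap in iv) — are precisely what guarantee that each collapse terminates at a single vertex rather than an edge, and the one-gap-versus-two-gap split in iv) is the last point to pin down.
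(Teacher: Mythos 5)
Your proof is correct. Parts i) and ii) are essentially identical to the paper's: both exhibit an explicit induced fork centered at a neighbor of $u$ on the hole, using chordlessness of $H$ and the hypothesis on $N(u)\cap V(H)$ to verify the non-edges (the paper phrases ii) as ``the other neighbor cannot be both $v_1$ and $v_5$,'' which is your ``at most one of the two mirror forks can fail''). For iii) and iv) the underlying mechanism is the same --- t-contracting vertices of $H$ outside $N(u)$ to shrink the hole --- but the architecture differs. The paper runs an induction on $|H|$: one t-contraction per step shortens the hole by two, and the inductive step checks that the neighbor pattern of $u$ (three consecutive, or four in one or two paths) is preserved or improves, with the base case $|H|=5$ yielding $K_4$ after a single contraction. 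You instead collapse whole arcs in one shot via your path-collapse tool (an odd chordless path with interior vertices of degree two contracts to a single vertex adjacent to the endpoints' outside neighbors), which gives iii) directly for every $\ell$, and in iv) you use a parity argument --- the gaps hold $\ell-4$ vertices in total, so some gap is odd --- to either produce $K_4$ immediately (one gap) or reduce to iii) (two gaps). What each buys: your version makes the parity bookkeeping explicit and produces the $K_4$ constructively without an induction hypothesis, at the cost of having to verify the collapse tool and the chordlessness/neighbor structure of the residual cycle; the paper's induction keeps each step's verification purely local (two neighbors of one contracted vertex) and treats iii) and iv) uniformly, at the cost of carrying the invariant through the induction. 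Both are complete; your reduction of iv) to iii) also implicitly uses transitivity of the t-minor relation, which holds and is worth a sentence.
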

\begin{proof}
  For assertions i) and ii), we number the vertices on $H$ as $v_1, v_2, \ldots$.  Suppose without loss of generality that $u$ is adjacent to $v_{3}$.  Then $u$ is adjacent to neither $v_{2}$ nor $v_{4}$.
  There is no other neighbor of $u$ on $H$ in i).
  In ii), $u$ cannot be adjacent to both $v_1$ and $v_5$; we may assume that $u$ is not adjacent to $v_1$.
  Then $\{v_{3}, v_{4},u , v_{2}, v_{1}\}$ forms a fork.\footnote{When we list the vertices of a (potential) fork, we always put the degree-three vertex first, followed by its three neighbors, the last of which has degree two.}

  For assertions iii) and iv), we focus on the subgraph $G'$ induced by $V(H) \cup \{u\}$; see Figure~\ref{fig:t-contraction}.  Note that any vertex in $V(H)\setminus N(u)$ has only two neighbors in $G'$, and they are not adjacent.  We do induction on the length of $H$.  In the base case, $|H| = 5$.  (Note that in this case, if $u$ has four neighbors on $H$, then they must be consecutive.)
  A t-contraction on a vertex in $V(H)\setminus N(u)$ leads to a $K_{4}$.  We now consider that $|H| > 5$.
  We apply a t-contraction on a vertex $v$ in $V(H)\setminus N(u)$, which shortens $H$ into a shorter odd hole, denoted by $H'$.
  The length of $H'$ is two shorter than $H$.
  If the neighbors of $u$ on $H$ are consecutive, then the two neighbors of $v$ cannot be both adjacent to $u$ (note that $|H| \ge 7$).  Thus, $u$ has the same number of neighbors on $H'$ as on $H$, and they remain consecutive.
  In the rest, $u$ must have four neighbors on $H$, and they form two paths.  If the two neighbors of $v$ are both adjacent to $u$, then $u$ has three consecutive neighbors on $H'$.  Otherwise, $u$ still has exactly four neighbors on $H'$ and they form one or two paths on $H'$.  Thus, the statements hold by induction.
\end{proof}

The following statement further extends Proposition~\ref{prop:neighbors-on-hole}(ii).  Each edge of an odd hole can only have one private neighbor, which is not adjacent to any other vertex on the hole.

\begin{proposition}
  \label{prop:size-constraint}
  Let $G$ be a $\{K_{4}, \mathrm{fork}\}$-free graph containing an odd hole $H$.
  For any two vertices on $H$, at most one of their common neighbors is adjacent to only two vertices on $H$.
\end{proposition}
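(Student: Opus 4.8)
The plan is to reduce immediately to an edge of $H$ and then to force either a $K_4$ or a fork by a two-case analysis. I would fix two vertices $x,y$ on $H$ and suppose, for contradiction, that they have two distinct common neighbors $u_1,u_2$, each adjacent to exactly two vertices of $H$. Because each $u_i$ sees exactly two vertices of $H$ and $G$ is fork-free, Proposition~\ref{prop:neighbors-on-hole}(ii) rules out that these two vertices are nonconsecutive; hence the two neighbors of each $u_i$ on $H$ are precisely $x$ and $y$, and $x,y$ must be consecutive. So I may relabel $x=v_1$, $y=v_2$, continuing around the hole as $v_1,v_2,\ldots,v_{2k+1}$, and the whole statement reduces to: an edge of $H$ has at most one private neighbor. (For nonconsecutive $x,y$ the count is vacuously zero, again by Proposition~\ref{prop:neighbors-on-hole}(ii).)

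I would then split on whether $u_1$ and $u_2$ are adjacent. If $u_1u_2\in E(G)$, then $\{v_1,v_2,u_1,u_2\}$ induces a $K_4$, since all four vertices are pairwise adjacent, contradicting $K_4$-freeness. If $u_1u_2\notin E(G)$, I claim the five vertices $v_2,u_1,u_2,v_3,v_4$ induce a fork with center $v_2$: the center $v_2$ is adjacent to $u_1$, $u_2$, and $v_3$; these three neighbors are pairwise nonadjacent, because $u_1$ and $u_2$ each miss $v_3$ (privacy) and $u_1u_2\notin E(G)$ by assumption; and the remaining vertex $v_4$ attaches only to $v_3$, as $v_4$ is nonadjacent to $v_2$ (distance two on a hole of length at least five) and to both $u_i$ (privacy again). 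This contradicts fork-freeness, which would complete the argument.

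The construction hinges on the single structural insight that the edge $u_1u_2$, present or absent, completes either the $K_4$ on $\{v_1,v_2,u_1,u_2\}$ or the fork around $v_2$; everything else is non-adjacency bookkeeping that follows from the hole being chordless and from the $u_i$ being private to the edge $v_1v_2$. The only point needing a moment of care, and thus the closest thing to an obstacle, is verifying that the five vertices genuinely induce a fork in the shortest hole $|H|=5$: there I must confirm that $v_4$ is distinct from and nonadjacent to $v_2$, which holds since $v_4$ and $v_2$ lie at distance two on $C_5$. A symmetric fork centered at $v_1$ with pendant path $v_1v_{2k+1}v_{2k}$ works equally well, so no delicate choice of direction is required.
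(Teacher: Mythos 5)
Your proof is correct and takes essentially the same route as the paper's: Proposition~\ref{prop:neighbors-on-hole}(ii) forces the two common neighbors to sit on an edge $v_1v_2$ of $H$, $K_4$-freeness on $\{v_1,v_2,u_1,u_2\}$ handles the case where they are adjacent, and the fork on $\{v_2,u_1,u_2,v_3,v_4\}$ centered at $v_2$ handles the nonadjacent case --- exactly the paper's configuration. The only cosmetic difference is that you present the $K_4$ step as an explicit case split, whereas the paper uses it immediately to deduce that the two common neighbors are nonadjacent.
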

\begin{proof}
  Suppose for contradiction that there are two distinct vertices $x$ and $y$ such that they have the same pair of neighbors on $H$.  By Proposition~\ref{prop:neighbors-on-hole}(ii), the neighbors of $x$ on $H$ have to be consecutive.  We number the vertices on $H$ as $v_1, v_2, \ldots$ such that $x$ is adjacent to $v_1$ and $v_2$.  Since $G$ is $K_4$-free, $x$ and $y$ are not adjacent.  Then $\{v_{2}, x, y, v_{3}, v_{4}\}$ forms a fork, a contradiction.
\end{proof}

The existence of claws forbids the possibility that each edge of an odd hold $H$ has precisely two neighbors on it.

\begin{proposition}
  \label{lem:reduce-to-claw-free}
  Let $G$ be a fork-free graph containing an odd hole $H$.  The graph $G$ is claw-free if
  \begin{enumerate} [i)]
  \item $G$ contains neither $K_{4}$ nor $W_{5}$; and
  \item every vertex $v\in V(G)\setminus V(H)$ has either zero or two neighbors on $H$.  %
  \end{enumerate}
\end{proposition}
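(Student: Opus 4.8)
The plan is to assume, for contradiction, that $G$ contains a claw and to first locate its center. I would begin by recording the local structure around $H$. By Proposition~\ref{prop:neighbors-on-hole}(i)--(ii) together with hypothesis (ii), every vertex of $V(G)\setminus V(H)$ has either no neighbor on $H$ or exactly two \emph{consecutive} neighbors on $H$; call a vertex of the latter kind a \emph{private neighbor} of the edge joining its two hole-neighbors. Since $G$ is $\{K_4,\mathrm{fork}\}$-free, Proposition~\ref{prop:size-constraint} shows that each edge of $H$ has at most one private neighbor. Consequently, writing $H=v_1v_2\cdots$, the neighborhood of any $v_i$ is contained in $\{v_{i-1},v_{i+1},a,b\}$, where $a$ and $b$ are the private neighbors of $v_{i-1}v_i$ and $v_iv_{i+1}$ respectively, if they exist. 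A direct check of these four candidates---using $a\sim v_{i-1}$, $b\sim v_{i+1}$, $a\not\sim v_{i+1}$, $b\not\sim v_{i-1}$, and $v_{i-1}\not\sim v_{i+1}$---shows the set has no independent triple. Hence no vertex of $H$ is the center of a claw.

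Therefore any claw center $c$ lies outside $H$. The workhorse of the remainder is the following extension principle: if $\{c;x,y,z\}$ is a claw and some leaf, say $x$, has a neighbor $p\notin\{c,x,y,z\}$ with $p\not\sim c,y,z$, then $\{c,y,z,x,p\}$ induces a fork. So, to avoid a fork, every neighbor of a leaf (other than $c$) must be adjacent to $c$ or to one of the other two leaves. I would then split on the adjacency of $c$ to $H$---either $c$ has two consecutive hole-neighbors $v_1,v_2$, or none---and on how many leaves lie on $H$ (at most one, since $v_1\sim v_2$).

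In the cases where the claw meets $H$, the strategy is to walk along $H$. For instance, if a leaf equals $v_1$, then its hole-neighbor $v_0$ satisfies $v_0\not\sim c$, so the extension principle forces $v_0$ to be adjacent to another leaf; that leaf is thereby pinned down as the private neighbor of an edge one step further along $H$, whose far endpoint again plays the role of $p$. Iterating, the three leaves are forced to be private neighbors of consecutive edges, after which the next hole vertex is a legal $p$ and yields a fork. The same walking argument applies when a leaf, rather than $c$, carries the two hole-neighbors.

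The main obstacle is twofold. First, for the conclusion to hold one must rule out a claw sitting far from $H$ (note that $K_{1,4}$ disjoint from $H$ satisfies (i) and (ii) yet is not claw-free); here I would invoke connectivity of $G$ and the fact that any vertex attaching the claw to the rest of $G$ either meets $H$ in exactly two consecutive vertices or, when attached to a leaf, already produces a fork, thereby reducing to the interacting cases above. Second, and more delicate, are the residual configurations in which the walk closes up---most notably when $H$ is a $C_5$, or when all leaves are private neighbors clustered on a short arc so that every candidate $p$ is blocked. There no fork is available, and one must instead show that the forced adjacencies among $c$, the private neighbors, and the short arc of $H$ create an induced $K_4$ or $W_5$; this is precisely where hypothesis (i) is consumed, and verifying it is the technical heart of the argument.
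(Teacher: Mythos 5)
There is a genuine gap, and it sits exactly where your proposal defers the work. Your opening step (hole vertices have degree at most four by Proposition~\ref{prop:size-constraint} and so cannot be claw centers) and your walking argument for a claw that meets $H$ or whose center is adjacent to $H$ reproduce the paper's first case. But a claw whose center lies at distance two or more from $H$ does not ``reduce to the interacting cases'': inspecting a vertex $p$ that attaches the claw to the rest of the graph never turns that claw into one meeting $H$, and when $p$ is adjacent to the center $c$, or to two of the leaves, your extension principle produces no fork at all. The paper handles this by choosing a claw whose center has minimum distance $d$ to $H$ and giving separate arguments for $d=2$ and $d\geq 3$: for $d\geq 3$, every configuration yields either a fork or a claw strictly closer to $H$, contradicting minimality; for $d=2$, when the connector $p$ is adjacent to two leaves $x,y$, minimality forbids a claw centered at $p$, which forces $x$ and $y$ to attach to $v_1$ and $v_2$, so that $c\,x\,v_1\,v_2\,y$ is a five-hole all of whose vertices are adjacent to $p$ --- a $W_5$. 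That is the one place where the $W_5$ part of hypothesis (i) is consumed, not the ``closed-up walk'' configurations you predict: the paper's distance-one case closes with forks and Proposition~\ref{prop:size-constraint} alone, even when $H$ is a $C_5$. So the technical heart you postpone (the minimal-distance device, the $d\geq 3$ descent, and the $W_5$ extraction at $d=2$) is precisely what is missing, and your guess at where (i) enters is misplaced.

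One point in your favor: your observation that a $K_{1,3}$ disjoint from $H$ satisfies (i) and (ii) is correct, and it shows that the proposition implicitly assumes the claw lies in the same component as $H$ --- the paper's own proof silently assumes this when it takes a shortest path from the claw center to $H$. But you cannot ``invoke connectivity of $G$,'' since connectivity is not among the hypotheses; this is a defect of the statement itself (harmless for its later uses, which concern the component of $H$), not something your argument can repair from the stated assumptions.
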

\begin{proof}
  Suppose that $G$ satisfies both conditions i) and ii).  We number the vertices on $H$ as $v_1, v_2, \ldots$.
  Since $G$ is a fork-free graph, if a vertex $v\in V(G)\setminus V(H)$ has two neighbors on $H$, then they are consecutive by Proposition~\ref{prop:neighbors-on-hole}(ii).
  Thus, for each $i$, a neighbor of $v_i$ not on $H$ is adjacent to either $v_{i-1}$ or $v_{i+1}$.
  By Proposition~\ref{prop:size-constraint}, the degree of $v_i$ is at most four, and it cannot be the center of a claw.
  Suppose for contradiction that $G$ contains a claw.  We take a claw $T$ of $G$ whose center has the shortest distance to $H$, denoted as $d$, among all claws of $G$.  Note that $d\ge 1$.  Let the vertex set of $T$ be $\{c, x, y, z\}$, where $c$ is the center of $T$.

  Case 1, $d = 1$. The vertex $c$ is adjacent to $H$, and by assumption, it has exactly two neighbors on $H$.  We first note that we can choose $T$ to intersect $H$.  Suppose that none of $x$, $y$, and $z$ is on $H$, and let $v$ be a neighbor of $c$ on $H$. Since the degree of $v$ is at most four, $v$ has at most one neighbor, say $z$, in $\{x, y, z\}$.  Then $\{c, x, y, v\}$ is another claw. In the rest, without loss of generality, let the two neighbors of $c$ on $H$ be $v_{1}$ and $v_{2}$, where $z = v_{2}$. Since $\{c, x, y, v_{2},v_{3}\}$ cannot induce a fork, at least one of $x$ and $y$ is adjacent to $v_3$.  Since neither $x$ nor $y$ is adjacent to $v_2$, they cannot be both adjacent to $v_3$. We may assume that $y$ is adjacent to $v_{3}$, hence to $v_{4}$ as well but no other vertex on $H$.  Since $\{c, x, v_{2}, y, v_{4}\}$ does not induce a fork, $x$ has to be adjacent to $v_4$ as well, and its other neighbor on $H$ is $v_5$.  But then $\{c, y, z, x, v_{5}\}$ induces a fork, a contradiction.

  Case 2, $d = 2$. We may assume that there is a common neighbor $p$ of $c$ and $v_{1}$, and the other neighbor of $p$ on $H$ is $v_{2}$. (Note that $c$ is not adjacent to $v_{1}$.)
  \begin{itemize}
  \item Subcase 2.1, $p$ has two or more neighbors in $\{x, y, z\}$, say $x$ and $y$. By the selection of $T$, there cannot be any claw in $G$ that has $p$ as the center.  Thus, $x$ is adjacent to either $v_{1}$ or $v_{2}$, and so is $y$.  Either $c x v_{1} v_{2} y$ or $c x v_{2} v_{1} y$ is a five-hole, and $p$ is adjacent to all vertices on it.  Therefore, $G$ contains a $W_{5}$, a contradiction.
  \item Subcase 2.2, $p$ has at most one neighbor in $\{x, y, z\}$.   (We are in this sub-case when $p$ is one of $\{x, y, z\}$.)
    Assume without loss of generality that $p$ is adjacent to neither $x$ nor $y$.
    Since $\{c, x, y, p, v_{1}\}$ does not induce a fork, $v_{1}$ is adjacent to at least one of $x$ and $y$.  On the other hand, $v_{2}$ cannot be adjacent to both $x$ and $y$. We may assume that $y$ is adjacent to $v_{2}$; note that the other neighbor of $y$ on $H$ has to be $v_3$.  Since $\{c, p, x, y, v_{3}\}$ does not induce a fork, $x$ is adjacent to $v_3$ as well; note that the other neighbor of $x$ on $H$ has to be $v_4$. But then $\{c, p, y, x, v_{4}\}$ induces a fork, a contradiction.
  \end{itemize}

  Case 3, $d \geq 3$. Let $c u_1 u_2 \cdots$ be a shortest path from $c$ to $H$.  Note that no vertex in $\{x, y, z\}$ is adjacent to $u_i$ with $i > 2$.
  \begin{itemize}
  \item Subcase 3.1,
    $u_1$ has at most one neighbor in $\{x, y, z\}$.   (We are in this sub-case when $u_1$ is one of $\{x, y, z\}$.)
    Assume without loss of generality that $u_1$ is adjacent to neither $x$ nor $y$.
    Since $\{c, x, y, u_{1}, u_{2}\}$ does not induce a fork, $u_{2}$ is adjacent to at least one of $x$ and $y$, say $y$.  But then $\{u_{2}, u_{1}, u_{3}, y\}$ induces a claw, and its center $u_2$ has a shorter distance to $H$ than $c$, a contradiction to the selection of $T$.
  \item Subcase 3.2, $u_1$ has two or more neighbors in $\{x, y, z\}$, say $x$ and $y$.  Note that $u_1$ cannot be adjacent to $z$; otherwise $\{u_1, x, y, z\}$ induces a claw, which contradicts the selection of $T$. If $u_2$ is adjacent to only $x$ in $\{x, y, z\}$, then $\{c, y, z, x, u_2\}$ induces a fork. If $u_2$ is adjacent to two vertices in $\{x, y, z\}$, then these two vertices, together with $u_2$ and $u_3$, form a claw that is closer to $H$ than $T$.  If $u_2$ is adjacent to neither $x$ nor $y$, then $\{u_1, u_2, x, y\}$ induces a claw that is closer to $H$ than $T$.
  \end{itemize}
  There, there cannot be a claw in $G$.
\end{proof}

The somewhat conflicting requirements in Propositions~\ref{prop:neighbors-on-hole} and \ref{lem:reduce-to-claw-free} exclude odd holes longer than five, and force every five-hole to be dominating (i.e., every vertex in the same component has neighbors on this hole).

\begin{proposition}
  \label{lem:fix-longer-hole}
  Let $G$ be a fork-free graph containing an odd hole $H$.  If $G$ contains a claw and does not contain any odd wheel as a t-minor, then $|H| = 5$, and every vertex in the same component as $H$ is adjacent to $H$.
\end{proposition}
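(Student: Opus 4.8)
The plan is to prove the two conclusions in turn, first locating a vertex with many neighbors on $H$ and then exploiting it. Since $G$ has no odd wheel as a t-minor and every induced subgraph is a t-minor, $G$ is in particular $\{K_4, W_5\}$-free (recall $K_4 = W_3$). As $G$ contains a claw it is not claw-free, so by the contrapositive of Proposition~\ref{lem:reduce-to-claw-free} condition ii) of that proposition must fail: some vertex $u \in V(G) \setminus V(H)$ has a number of neighbors on $H$ that is neither zero nor two. By Proposition~\ref{prop:neighbors-on-hole}(i) it is not one, so $u$ has at least three neighbors on $H$. This single vertex drives both parts.

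For $|H| = 5$ I would argue by contradiction, assuming $|H| \ge 7$, and build a forbidden odd-wheel t-minor from $u$. The first step is to show that $u$ has no \emph{isolated} neighbor, i.e.\ a neighbor $v_i$ with $v_{i-1}, v_{i+1} \notin N(u)$: otherwise $\{v_i; u, v_{i-1}, v_{i+1}\}$ is a claw, and fork-freeness forces $v_{i\pm 2} \in N(u)$ and $N(u) \cap V(H) = \{v_{i-2}, v_i, v_{i+2}\}$ (else a pendant on $v_{i\pm1}$ or on $u$ completes a fork, using $|H| \ge 7$ so that $v_{i-1} \not\sim v_{i+2}$); but then $v_{i+2}$ is itself isolated and the same reasoning would need $v_{i+4} \in N(u)$, impossible on an odd hole of length at least seven. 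Hence every maximal run of neighbors of $u$ has length at least two. Working inside $G[V(H) \cup \{u\}]$, I then t-contract the non-neighbors of $u$ gap by gap: contracting a non-neighbor adjacent to a neighbor-run absorbs it into $N(u)$ and shortens that gap by two, so a gap of length $g$ is eliminated by $\lceil g/2\rceil$ t-contractions. Each contraction removes two rim vertices, so the resulting odd cycle, all of whose vertices are adjacent to $u$, has length $m = d - (\text{number of odd-length gaps})$, where $d = |N(u)\cap V(H)| \ge 3$. Because all runs have length at least two, there are at most $\lfloor d/2\rfloor$ gaps, whence $m \ge \lceil d/2\rceil \ge 2$; as $m$ is odd, $m \ge 3$, and $u$ together with this cycle is an odd wheel $W_m$ that is a t-minor of $G$, a contradiction. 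Therefore $|H| = 5$.

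For domination I would again argue by contradiction: if some vertex of $H$'s component is not adjacent to $H$, pick such a vertex $b$ at distance two from $H$, adjacent to a vertex $a$ with neighbors on $H$. On the five-hole, Proposition~\ref{prop:neighbors-on-hole} leaves only two possibilities for $a$: two consecutive neighbors, or three neighbors consisting of a consecutive pair together with a vertex at distance two from both. In the three-neighbor case, say $a \sim v_1, v_2, v_4$, the set $\{a; v_1, v_4, b\}$ is a claw and $v_3$ (adjacent to $v_4$ but to none of $a, v_1, b$) extends it to a fork, a contradiction. So $a$ has exactly the two consecutive neighbors $v_1, v_2$. This remaining case is the delicate one, because the subgraph induced by $V(H) \cup \{a, b\}$ alone is claw-free (the triangle $v_1 v_2 a$ blocks every claw through $a$), so the contradiction must invoke the global claw. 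I would use a vertex with three neighbors on $H$, whose \emph{single} neighbor $v_s$ centers a claw $\{v_s; v_{s-1}, v_{s+1}, u\}$, and run a minimal-distance argument routing $b$ to this claw to supply the missing pendant and complete a fork.

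The main obstacle is precisely this last subcase of domination. Everything up to $|H|=5$, together with the three-neighbor case, is a clean combination of Propositions~\ref{prop:neighbors-on-hole}--\ref{lem:reduce-to-claw-free} with the odd-wheel t-minor construction; but excluding a vertex with exactly two (consecutive) neighbors on $H$ that has an off-hole neighbor seems to require a careful, position-by-position matching of $a$'s neighbor pair against the configuration of a three-neighbor vertex, finding in each case a fork that uses $b$ as a leaf or a pendant.
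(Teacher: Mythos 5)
Your first half (showing $|H|=5$) is correct and takes a genuinely different route from the paper. The paper pins down $N(u)\cap V(H)$ through a chain of fork arguments and then invokes Proposition~\ref{prop:neighbors-on-hole}(iii, iv) to produce $K_4$ t-minors; you instead show that on a hole of length at least seven every neighbor of $u$ lies in a run of length at least two (the isolated-neighbor analysis, which forces $N(u)\cap V(H)=\{v_{i-2},v_i,v_{i+2}\}$ and then self-destructs at $v_{i+2}$), and then contract the gaps to exhibit an odd wheel directly. The bookkeeping checks out: the final cycle has odd length $m = d - \#\{\text{odd gaps}\} \ge d - \lfloor d/2\rfloor \ge 2$, hence $m\ge 3$, and every intermediate cycle has length at least $m+2\ge 5$, so each contracted vertex really has an independent neighborhood. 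This is a clean, self-contained alternative to the paper's case analysis.

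The domination half, however, has a genuine gap, which you yourself flag: the case where the neighbor $a$ of the distance-two vertex $b$ has exactly two consecutive neighbors on $H$ is never actually handled; ``run a minimal-distance argument routing $b$ to this claw'' is a hope, not a proof, and since (as you correctly observe) $G[V(H)\cup\{a,b\}]$ is fork-free on its own, no local argument can work. The paper closes exactly this point with an imported tool: Lemma 1 of Lozin and Milani\v{c}~\cite{Lozin08}, which in a fork-free graph forces a vertex adjacent both to $H$ and to a vertex with no neighbor on $H$ to be adjacent to \emph{all} of $V(H)$; applied to $a$ (the paper's $x_2$, with $x_3$ playing the role of $b$, after first ruling out $x_3\sim u$ by a fork) this immediately yields a $W_5$. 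Without that lemma your case is still closeable, but only by bringing the three-neighbor vertex $u$ (say $u\sim v_1,v_3,v_4$) into play and checking the five rotations of $a$'s pair against $u$: first $b\not\sim u$, since otherwise $\{v_1, v_2, v_5, u, b\}$ is a fork; then, for instance, if $a\sim v_1,v_2$, either $a\not\sim u$ and $\{v_1, v_5, u, a, b\}$ is a fork, or $a\sim u$ and $\{a, b, v_2, u, v_4\}$ is a fork; if $a\sim v_3,v_4$, then $a\sim u$ creates a $K_4$ on $\{a,u,v_3,v_4\}$ while $a\not\sim u$ makes $\{v_3, v_2, u, a, b\}$ a fork; the remaining rotations are symmetric. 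So the missing step is a short finite check rather than a deep obstacle, but as written your proposal does not prove the second assertion of the proposition.
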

\begin{proof}
  We number the vertices of $H$ as $v_{1}, \dots, v_{\ell}$, where $\ell = 2k+1$.
  Since $G$ contains a claw, by Proposition~\ref{prop:neighbors-on-hole} and Proposition~\ref{lem:reduce-to-claw-free}, we can find a vertex $u$ that has three or more neighbors on $H$.

  We first show $|H| = 5$ by contradiction.
  On the other hand, $u$ has a non-neighbor on $H$ because $G$ is free of odd wheels.
  We may assume without loss of generality that $u$ is adjacent to $v_4$ but not $v_5$.
  We argue that $u$ is adjacent to $v_3$ by contradiction.
  If $u$ is not adjacent to $v_{3}$, then $u$ must be adjacent to $v_{2}$ because $\{v_4,v_5, u,v_3,v_2\}$ does not induce a fork.
  By symmetry, $u$ is adjacent to $v_{6}$. But then dependent on the adjacency between $u$ and $v_{1}$, either $\{v_4,v_3, v_5,u,v_1\}$ or $\{u,v_4,v_6,v_2,v_1\}$ induces a fork.   Thus, $u$ is adjacent to $v_{3}$.

  If $u$ has precisely three neighbors and the only other neighbor of $u$ on $H$ is $v_2$, then $G$ contains $K_4$ as a t-minor by Proposition~\ref{prop:neighbors-on-hole}(iii).  Therefore, there is a neighbor $v_i$ of $u$ with $i\not\in \{2, 3, 4\}$.  We traverse $H$ from $v_4$, $v_5$ till we meet the next neighbor of $u$; let it be $v_{i}$. Since $\ell \ge 7$ and $i \ne 2$, one of $v_3$ and $v_4$ is nonadjacent to all the vertices in $\{v_{i-1},v_{i},v_{i+1}\}$. Since neither $\{v_i, v_{i-1}, v_{i+1},u,v_4\}$ nor $\{v_i, v_{i-1}, v_{i+1},u,v_3\}$ induces a fork, $u$ has to be adjacent to $v_{i+1}$ as well. If $u$ has precisely four neighbors on $H$, namely, $v_3$, $v_4$, $v_i$, and $v_{i+1}$, then $G$ contains $K_4$ as a t-minor by Proposition~\ref{prop:neighbors-on-hole}(iv).  Therefore, $u$ has at least five neighbors on $H$.  As a result, $6 \le i \le \ell$. If $i > 6$, then $\{u, v_{4}, v_{j}, v_{i}, v_{i-1}\}$, where $v_j$ is another neighbor of $u$ on $H$, induces a fork (note that $v_j$ cannot be adjacent to $v_{4}$, $v_{i-1}$, or $v_{i}$). In the rest, $i=6$.  If $v_5$ is the only non-neighbor of $u$ on $H$, then $G$ contains an odd wheel as a t-minor. Hence, $u$ has at least one neighbor and one non-neighbor in $\{v_8, v_9, \ldots, v_\ell, v_1, v_2\}$.  We can find a $j$ such that $u$ is adjacent to precisely one of $\{v_j, v_{j+1}\}$. But then $\{u,v_{4},v_{6}, v_j, v_{j+1}\}$ induces a fork (note that there cannot be any edge between $v_{4},v_{6}$ and $v_j, v_{j+1}$).
  Therefore, the length of $H$ has to be five.

  By Proposition~\ref{prop:neighbors-on-hole}(iii, iv), the vertex $u$ has three nonconsecutive neighbors on $H$.  Assume without loss of generality that they are $v_1$, $v_3$, and $v_4$.  Let $X = V(H) \cup \{u\}$.
  Suppose that there exists a vertex $x$ that is in the same component as $H$ but has no neighbor on $H$.  We can find a shortest path $x_{1} x_{2} \dots x_{p}$ between $x_{p} = x$ and $H$; hence, $x_{1}$ is the only common vertex of this path and $H$.  Note that $p\ge 3$, and the vertex $x_{3}$ has no neighbor on $H$.  If $x_3$ is adjacent to $u$, then $\{v_{1}, v_{2}, v_{5}, u, x_{3}\}$ forms a fork with $v_1$ as the degree-three vertex.  Therefore, $N(x_{3})\cap X = \emptyset$.  Since $x_{2}$ is adjacent to both $x_{3}$ and $H$, it is adjacent to all the vertices on $H$ according to Lozin and Milani\v{c}~\cite[Lemma 1]{Lozin08}.  But then we have an odd wheel, which is impossible.
\end{proof}

We fix a five-hole $H$ and number its vertices as $v_{1}, \dots, v_{5}$.
For $i = 1, \dots, 5$, let $U_{i}$ be the set of common neighbors of $v_{i+2}$ and $v_{i+3}$.  They are independent sets because $G$ is $K_4$-free.
By Propositions~\ref{prop:neighbors-on-hole}--\ref{lem:fix-longer-hole}, together with $V(H)$, the five sets partition $V(G)$.  An independent set is \emph{maximal} if it is not a subset of any other independent set.

\begin{proposition}\label{prop:independent-set-2}
  Let $G$ be a $K_{4}$-free graph containing a five-hole $H$. 
  For $i = 1, \ldots, 5$, the set $\{v_{i-1},v_{i+1}\} \cup U_{i}$ is a maximal independent set of $G$.
  Moreover, they are all the maximal independent sets of $G$ that contains two vertices from $H$.
\end{proposition}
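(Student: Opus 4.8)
The plan is to prove the statement for a generic index $i$; the argument is identical for every $i$ after shifting all indices, so I would simply carry the index arithmetic modulo five throughout. Writing $S_i := \{v_{i-1}, v_{i+1}\}\cup U_i$, I would establish three things in turn: that $S_i$ is independent, that it is maximal, and that every maximal independent set meeting $H$ in at least two vertices equals some $S_i$. The structural inputs I would rely on are already available: each $U_i$ is independent because $G$ is $K_4$-free, the sets $V(H), U_1, \dots, U_5$ partition $V(G)$, and no outside vertex has three consecutive neighbours on $H$ by Proposition~\ref{prop:neighbors-on-hole}(iii).

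For independence, the two hole vertices $v_{i-1}$ and $v_{i+1}$ are at distance two on the five-hole and hence nonadjacent, and $U_i$ is independent. The only remaining pairs to check are $v_{i\pm1}$ against a vertex $u\in U_i$. Since $u$ is adjacent to the consecutive pair $v_{i+2}, v_{i+3}$, an edge from $u$ to $v_{i+1}$ would create the three consecutive neighbours $v_{i+1}, v_{i+2}, v_{i+3}$, and an edge to $v_{i-1}=v_{i+4}$ would create $v_{i+2}, v_{i+3}, v_{i+4}$; either is forbidden by Proposition~\ref{prop:neighbors-on-hole}(iii). Hence $u$ is adjacent to neither, and $S_i$ is independent.

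For maximality I would show that every vertex outside $S_i$ has a neighbour in $S_i$. The hole vertices missing from $S_i$ are $v_i, v_{i+2}, v_{i+3}$, and each has a neighbour in $\{v_{i-1}, v_{i+1}\}$: indeed $v_i$ is adjacent to both, $v_{i+2}$ to $v_{i+1}$, and $v_{i+3}$ to $v_{i+4}=v_{i-1}$. An outside vertex $w$ lies in a unique $U_j$ by the partition; when $j\neq i$, a short check of the index arithmetic shows that the defining pair $\{v_{j+2}, v_{j+3}\}$ of $U_j$ always meets $\{v_{i-1}, v_{i+1}\}$ as $j$ ranges over $\{i+1, i+2, i+3, i+4\}$, and since $w$ is adjacent to both vertices of that pair, it is adjacent to a vertex of $S_i$. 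Together with $U_i\subseteq S_i$, this shows that no vertex can be added, so $S_i$ is maximal.

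Finally, let $S$ be a maximal independent set with $|S\cap V(H)|\geq 2$. Because the maximum independent sets of the five-cycle have size two, $S\cap V(H)$ consists of exactly two nonadjacent vertices, which must be of the form $\{v_{i-1}, v_{i+1}\}$ for a unique $i$ (their unique common neighbour on $H$ being $v_i$). Any $w\in S\setminus V(H)$ lies in some $U_j$; if $j\neq i$, then by the maximality computation above $w$ would be adjacent to $v_{i-1}$ or $v_{i+1}\in S$, contradicting independence. Hence $S\setminus V(H)\subseteq U_i$, so $S\subseteq S_i$, and maximality of $S$ forces $S=S_i$. I expect this last step to be the main obstacle: it requires combining the partition of $V(G)$ into the $U_i$ with the index bookkeeping to exclude any vertex of $S$ lying outside $U_i$, and the whole argument quietly depends on three-consecutive-neighbour configurations being ruled out via Proposition~\ref{prop:neighbors-on-hole}(iii) rather than merely on $K_4$-freeness.
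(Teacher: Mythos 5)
Your proof is correct when read, as it must be, with the section's standing assumptions in force (the partition $V(H), U_1,\dots,U_5$ of $V(G)$ and the restrictions on neighbourhoods established by Propositions~\ref{prop:neighbors-on-hole}--\ref{lem:fix-longer-hole}), and it takes essentially the same route as the paper: identify the two hole vertices of $S$ as $v_{i-1},v_{i+1}$, use the partition to force $S\setminus V(H)\subseteq U_i$, and conclude equality from maximality. The paper's proof is just a terser version of your third paragraph, leaving your explicit verification of independence and maximality of $S_i$ implicit; your closing remark that the statement cannot rest on $K_4$-freeness alone is also on target.

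One step is misattributed, though. You rule out an edge between $u\in U_i$ and $v_{i\pm1}$ ``by Proposition~\ref{prop:neighbors-on-hole}(iii),'' but that proposition concludes only that $K_4$ is a \emph{t-minor} of $G$, which is not contradicted by the hypothesis that $G$ is $K_4$-free: a $C_5$ together with a vertex adjacent to three consecutive hole vertices is $K_4$-free yet has a $K_4$ t-minor (and indeed falsifies the proposition as literally stated). The contradiction needs the standing assumption that $G$ has no odd wheel ($W_3=K_4$) as a t-minor. Moreover, part (iii) covers only the case of \emph{exactly} three neighbours; if $u$ is also adjacent to $v_i$, an extra edge to $v_{i+1}$ gives four neighbours and you would need part (iv) instead. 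Both defects vanish if you instead derive the nonadjacency from a fact you already invoke: were $u\in U_i$ adjacent to $v_{i+1}$, it would be a common neighbour of the consecutive pair $v_{i+1},v_{i+2}$ and hence also lie in $U_{i-1}$, contradicting the disjointness of the partition (similarly, adjacency to $v_{i-1}$ would put $u$ in $U_{i+1}$). With that substitution your argument is complete and matches the paper's.
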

\begin{proof}
  Let $S$ be a maximal independent set of $G$.
  Since $H$ is a $C_5$, it follows 
  $|S\cap V(H)| \le 2$.
  If $S$ contains precisely two vertices from $H$, they have to be $v_{i-1}$ and $v_{i+1}$ for some $i$.  By the definitions of $U_i$, we have $S\setminus V(H)\subseteq U_i$.  Thus, $S \subseteq \{v_{i-1},v_{i+1}\} \cup U_{i}$.
  Since there is no edge among vertices in $U_i$, it must hold by equality by the maximality of $S$.
\end{proof}

If a vertex in $U_{i}$ has another neighbor on $H$, then it has to be $v_i$ by Propositions~\ref{prop:neighbors-on-hole}(iii).
We can thus partition $U_{i}$ into $U_{i}^{+} = U_{i}\cap N(v_i)$ and $U_{i}^{-} = U_{i}\setminus N(v_i)$.
For any vertex $x\in U_{i}^+$, the set $\{v_{i}, v_{i-1}, v_{i+1}, x\}$ induces a claw.
According to Proposition~\ref{prop:size-constraint}, $|U_{i}^{-}| \leq 1$ for $i = 1, \dots, 5$.
By Proposition~\ref{lem:reduce-to-claw-free}, $\bigcup_{i=1}^5 U_{i}^{+}$ is not empty.
We summarize the adjacency relations among the parts in the following proposition when $U_{i}^+ \neq \emptyset$.
Two disjoint vertex sets are \emph{complete} if all the edges between them are present.

\begin{proposition}
  \label{prop:fork-free-results}
  Let $G$ be a $\{K_{4}, W_5, \mathrm{fork}\}$-free graph containing a five-hole $H$.  If any vertex in $V(G)\setminus V(H)$ has either two consecutive neighbors or three nonconsecutive neighbors on $H$, and $U_{i}^{+}$ is nonempty for some $i = 1, \ldots, 5$, then
  \begin{enumerate} [i)]
    \item \label{lb:fork-free-i} $U_{i}$ is complete to  $U_{i-2} \cup U_{i+2}$;
    \item \label{lb:fork-free-ii} $U_{i}$ is complete to $U_{i-1}^{-} \cup U_{i+1}^{-}$;
  \item \label{lb:fork-free-iii} $U_{i+1}^{-}$ is complete to $U_{i+2}^{-}$;  
     \item \label{lb:fork-free-v} at least one of $U_{i+2}$ and $U_{i-2}$ is empty; and
     \item \label{lb:fork-free-iv} a vertex in $U_{i}^{+}$ has at most one non-neighbor in $U_{i-1}^{+}$ and at most one non-neighbor in $U_{i+1}^{+}$. 
  \end{enumerate} 
\end{proposition}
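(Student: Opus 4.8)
The plan is to fix, after relabelling the hole, the index so that the nonempty part is $U_{1}^{+}$, and to keep one \emph{witness} vertex $w\in U_{1}^{+}$ (so $w$ is adjacent to $v_{1},v_{3},v_{4}$) in hand throughout. A single symmetry does half the bookkeeping: the reflection of $H$ fixing $v_{1}$ and swapping $v_{2}\leftrightarrow v_{5}$, $v_{3}\leftrightarrow v_{4}$ fixes $U_{1}$ and $U_{1}^{+}$ (hence the witness), while interchanging $U_{3}\leftrightarrow U_{4}$, $U_{2}\leftrightarrow U_{5}$, and their $\pm$ parts. Since the hypotheses and all five conclusions refer only to adjacencies to $H$ and among the $U_{j}$'s, every argument is invariant under this relabelling, so it suffices to prove one statement of each symmetric pair. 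The uniform tactic is: assume a prescribed edge is absent (or, for~(iv), present), and exhibit a forbidden induced fork, $K_{4}$, or $W_{5}$. I will prove~(i),~(ii),~(v) first, then deduce~(iii) from~(i) and~(ii), and finally attack~(iv) using~(i).

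For~(i) it is enough to show $U_{1}$ is complete to $U_{3}$. If $x\in U_{1}^{+}$ and $y\in U_{3}$ with $xy\notin E$, then $\{v_{1},v_{2},y,x,v_{4}\}$ is a fork (so this case also records $wy\in E$ for every $y\in U_{3}$). For the unique $x\in U_{1}^{-}$ (unique by Proposition~\ref{prop:size-constraint}), I split on $y$: if $y\in U_{3}^{+}$ then $\{v_{3},v_{2},x,y,v_{5}\}$ is a fork, and if $y\in U_{3}^{-}$ then, since $wy\in E$ is already known, $\{v_{3},v_{2},x,w,y\}$ is a fork. Part~(ii) reduces to $U_{1}$ complete to $U_{5}^{-}$ and runs in the same spirit: for $x\in U_{1}^{+}$, $\{v_{1},v_{5},x,v_{2},z\}$ is a fork when $xz\notin E$ (giving $wz\in E$ for all $z\in U_{5}^{-}$), and for $x\in U_{1}^{-}$ the set $\{v_{4},v_{5},x,w,z\}$ is a fork. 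Part~(v) is a single fork: two non-neighbours $y_{1},y_{2}\in U_{2}^{+}$ of some $x\in U_{1}^{+}$ are nonadjacent by $K_{4}$-freeness, and then $\{v_{4},y_{1},y_{2},x,v_{1}\}$ is a fork centered at their common neighbour $v_{4}$.

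For~(iii), I would first use~(ii) and~(i) to locate the witness's edges: for $a\in U_{2}^{-}$ and $b\in U_{3}^{-}$ we get $wa\in E$ from~(ii) and $wb\in E$ from~(i). If $ab\notin E$, then $a,b,v_{3}$ are three pairwise-nonadjacent neighbours of $w$, and $v_{2}$ (adjacent to $v_{3}$ alone among them) extends the claw to the fork $\{w,a,b,v_{3},v_{2}\}$, a contradiction.

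The crux is~(iv). Suppose both $U_{3}$ and $U_{4}$ are nonempty, say $p\in U_{3}$ and $q\in U_{4}$. By~(i), $w$ is adjacent to both $p$ and $q$, which are also adjacent to $v_{1}$; as $\{v_{1},w,p,q\}$ cannot be a $K_{4}$, this forces $pq\notin E$. The difficulty is that the forbidden structure now depends on the $\pm$ types of $p$ and $q$, so I expect to case on them. When $p\in U_{3}^{+}$ and $q\in U_{4}^{+}$, the five neighbours $\{p,v_{3},v_{4},q,v_{1}\}$ of $w$ induce a $C_{5}$, producing a $W_{5}$. In the mixed case $p\in U_{3}^{+},\,q\in U_{4}^{-}$ this cycle degrades and I instead extract the fork $\{w,v_{4},p,q,v_{2}\}$ (with its mirror covering the reversed types), and when both lie in the ``$-$'' parts I use the fork $\{w,v_{3},q,p,v_{5}\}$. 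Organising this four-way split, and verifying in each case that the chosen five vertices really induce the claimed fork or $W_{5}$ (rather than acquiring an extra chord), is the main obstacle; the remaining parts are comparatively mechanical once the witness $w$ and the reflection symmetry are in place.
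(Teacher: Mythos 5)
Your proposal is correct and takes essentially the same route as the paper's own proof: fix the index by relabelling the hole, keep a witness vertex from the nonempty $U_i^{+}$, exploit the reflection symmetry to halve the casework, and exhibit a forbidden fork, $K_4$, or $W_5$ for each claimed adjacency (all of your forks and the $W_5$ verify). Your four-way case split for the emptiness claim is just the paper's argument reorganized: the paper derives sequentially that both vertices must lie in the ``$+$'' parts via the same mixed-type fork you use, and then produces the same $W_5$.
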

\begin{proof}
  We show the statements for $i = 3$; they hold for other indices by symmetry.

  (i) Let $x$ be an arbitrary vertex in $U_{3}$ and $y$ an arbitrary vertex in $U_{5}$.
  Suppose first that $x\in U_{3}^{+}$.
  By definition, $x$ is adjacent to both $v_{1}$ and $v_{3}$ but not $v_{4}$,
  and $y$ is adjacent to $v_{3}$ but neither $v_{1}$ nor $v_{4}$.
  They have to be adjacent as otherwise $\{v_{3}, y, v_{4}, x, v_{1}\}$ forms a fork.
  Thus, $U_{3}^{+}$ is complete to $U_5$, and a similar argument implies that $U_{3}$ is complete to $U_5^{+}$.
  The only remaining case is when $x\in U_{3}^{-}$ and $y\in U_{5}^{-}$ (we have nothing to show if one or both of them are empty).
  We take an arbitrary vertex $x'\in U_{3}^{+}$, which is nonempty by assumption.  Note that $x$ is not adjacent to $x'$, and we have seen above that $x'$ is adjacent to $y$.
  By definition, both $x$ and $x'$ are adjacent to $v_{5}$ and neither is adjacent to $v_{4}$.
Thus, $x$ and $y$ must be adjacent as otherwise $\{v_{5}, v_{4}, x, x', y\}$ forms a fork.  
    A symmetric argument applies to $U_{3}$ and $U_{1}$.
    
    (ii) Let $x$ be an arbitrary vertex in $U_{3}$.
    The statement holds vacuously for $U_{4}^{-}$ when it is empty.  Assume that $U_{4}^{-}\ne\emptyset$ and $y$ be the vertex in $U_{4}^{-}$.
    By definition, $y$ it is adjacent to $v_{2}$ but none of $v_{3}$, $v_{4}$, and $v_{5}$.    
    If $x$ is in $U_{3}^{+}$, then $x$ must be adjacent to $y$ as otherwise $\{v_{3}, v_{4}, x, v_{2}, y\}$ forms a fork.
    In the remaining case, $x \in U_{3}^{-}$.
    We take an arbitrary vertex $x'\in U_{3}^{+}$, which is nonempty by assumption.
    By the argument above, $x'$ is adjacent to $y$.
    The vertices $x$ and $y$ must be adjacent as otherwise $\{v_{5}, v_{4}, x, x', y\}$ forms a fork.
    A symmetric argument implies that $U_3$is complete to $U_{2} \setminus U_{2}^{+}$.     

    iii)
    This assertion holds vacuously for when one or both
    of $U_{4}^{-}$ and $U_{5}^{-}$ are empty.  Hence we may assume otherwise.  For $j = 4, 5$, let $u_{j}^{-}$ be the vertex in $U_{j}^{-}$.
    We take an arbitrary vertex $x\in U_{3}^{+}$, which is nonempty by assumption.
    By definition, the vertex $x$ is adjacent to $v_{5}$ but not $v_{4}$, the vertex $u_{4}^{-}$ is adjacent to neither $v_{5}$ nor $v_{4}$, and the vertex $u_{5}^{-}$ is adjacent to neither $v_{4}$ nor $v_{5}$.
    By assertions (i, ii), $x$ is adjacent to both $u_{4}^{-}$ and $u_{5}^{-}$.  Therefore, $z$ must be adjacent to $y$ as otherwise $\{x,y,z,v_{5},v_{4}\}$ forms a fork.

    (iv) Suppose for contradiction that neither $U_{1}$ nor $U_{5}$ is empty.  We pick three arbitrary vertices
    $u_{1}$, $u_{3}^{+}$, and $u_{5}$ from $U_{1}$, $U_{3}^{+}$, and $U_{5}$, respectively.  By assertion~(\ref{lb:fork-free-i}), $u_{3}^{+}$ is adjacent to both $u_{1}$ and $u_{5}$.
    If $u_{1}$ and $u_{5}$ are adjacent, then $\{u_{3}^{+},u_{5},u_{1},v_{3}\}$ is a clique, a contradiction to that $G$ is $K_4$-free.  In the rest, $u_{1}$ is not adjacent to $u_{5}$.  The vertex $v_{1}$ must be adjacent to $u_{1}$ as otherwise $\{u_{3}^{+}, u_{5},v_{1},u_{1},v_{4}\}$ forms a fork.  By symmetry, $v_{5}$ is adjacent to $u_{5}$.   But then $u_{3}^{+}$ has five neighbors on the hole $u_{5} v_{5} v_{2} u_{1} v_{3}$, contradicting that $G$ is $W_5$-free.  

    (v) Let $x$ be an arbitrary vertex in $U_{3}^{+}$.
    If there are two distinct vertices $y$ and $y'$ in $U_{4}^{+}\setminus N(x)$, then $\{v_{2}, y, y', v_{3}, x\}$ forms a fork.
    A symmetric argument applies to $U_{2}^{+}$.
\end{proof}

The rest of this section is about coloring.
For a positive integer $k$,  we call $G$ \emph{$k$-colorable} if we can partition $V(G)$ into $k$ independent sets.  The smallest $k$ such that $G$ is $k$-colorable is the \emph{chromatic number} of $G$.

\begin{lemma}
  \label{lem:optimal-coloring}
  Let $G$ be a fork-free graph.  If $G$ does not contain a $C_{7}^{2}$, a $C_{10}^{2}$, or any odd wheel as a t-minor, then the chromatic number of $G$ is at most three, and an optimal coloring of $G$ can be found in polynomial time.
\end{lemma}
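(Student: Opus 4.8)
The plan is to color each connected component of $G$ separately; since the chromatic number of $G$ is the maximum over its components and a proper coloring of $G$ is obtained by combining the component colorings, it suffices to treat a single component $C$, which inherits every hypothesis. I would first dispose of the easy cases. If $C$ is perfect, then $\chi(C)=\omega(C)$; because $W_3=K_4$ is a forbidden t-minor and any four mutually adjacent vertices induce a $K_4$ (hence a t-minor), the component has no $K_4$, so $\omega(C)\le 3$ and three colors suffice, with an optimal coloring computable in polynomial time by the algorithm for perfect graphs. If $C$ is imperfect, the strong perfect graph theorem gives an odd hole or an odd antihole; since $\overline{C_7}=C_7^2$ is forbidden, every longer odd antihole contains a $K_4$, and $\overline{C_5}=C_5$ is already an odd hole, the component contains an odd hole. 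If in addition $C$ is claw-free, then $C$ is a claw-free member of our family, and I would invoke the known three-colorability of claw-free t-perfect graphs~\cite{Bruhn12}, the coloring again being found in polynomial time.

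The remaining, and main, case is that $C$ is imperfect and contains a claw. Here I would apply Proposition~\ref{lem:fix-longer-hole} to a fixed odd hole: every odd hole is a five-hole $H=v_1\cdots v_5$, the component equals $V(H)\cup\bigcup_{i=1}^{5} U_i$, each $U_i$ is independent and splits as $U_i^{+}\cup U_i^{-}$ with $|U_i^{-}|\le 1$, and some $U_i^{+}$ is nonempty (say $U_3^{+}$). The engine of the argument is Proposition~\ref{prop:fork-free-results}: parts $U_i$ and $U_j$ at index-distance two are complete by (\ref{lb:fork-free-i}), a part is complete to the singletons $U_{i\pm1}^{-}$ by (\ref{lb:fork-free-ii}), and consecutive positive parts $U_i^{+},U_{i+1}^{+}$ miss at most one edge per vertex by (\ref{lb:fork-free-iv}). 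Consequently every two of the five parts are complete or nearly complete, so that, together with $K_4$-freeness and the emptiness relation~(\ref{lb:fork-free-v}), only a tightly restricted pattern of the five parts can be simultaneously nonempty.

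I would then carry out a short case analysis over the possible nonempty patterns, exploiting the rotational and reflectional symmetry of $H$ to keep the number of cases small. In each pattern I would exhibit an explicit three-coloring: color $v_1,\dots,v_5$ with the three colors so that the two or three hole-neighbors of each nonempty part omit some color, assign each part that missing color, and resolve the few genuine conflicts (for instance a lone vertex of $U_i^{+}$ nonadjacent to a vertex of $U_{i+1}^{+}$) individually. The crucial structural fact is that $K_4$-freeness collapses the apparently dangerous configuration of four pairwise-complete nonempty parts: one vertex from each would, via relations (\ref{lb:fork-free-i})--(\ref{lb:fork-free-iv}), form a $K_4$, so the offending parts must be empty or reduced to singletons. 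I expect this verification to be the main obstacle, since one must simultaneously check that the explicit assignment is proper against all of the completeness relations while arguing that no three large parts are pairwise complete.

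Finally, for the running time I would note that every step is constructive: connected components, a test for perfection (equivalently, detecting an odd hole or antihole), and, in the imperfect case, an odd hole, a claw, and the induced partition into the sets $U_i$ can all be found in polynomial time; the explicit three-colorings are read off from the nonempty pattern, and perfect components are colored by the polynomial-time algorithm for perfect graphs. Combining the per-component colorings yields an optimal coloring of $G$ in polynomial time.
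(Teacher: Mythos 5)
Your overall architecture matches the paper's proof exactly: reduce to connected graphs, dispose of the perfect case via the bounded-clique perfect-graph coloring algorithm, dispose of the claw-free case via Bruhn--Stein, and in the remaining case fix a five-hole $H$, partition the rest of the component into $U_1,\dots,U_5$ (Proposition~\ref{lem:fix-longer-hole}), and exploit Proposition~\ref{prop:fork-free-results}. The problem is that everything after that point -- which is the actual content of the lemma -- is deferred: ``I would then carry out a short case analysis \dots I expect this verification to be the main obstacle.'' That case analysis is not a routine verification one can wave at; it is the proof. The paper normalizes the numbering so that $U_1^+\neq\emptyset$ and $U_4=\emptyset$ (possible by Proposition~\ref{prop:fork-free-results}(\ref{lb:fork-free-v})), and then in each of three surviving patterns writes down a three-coloring in which every color class is a \emph{whole} set $U_i$ plus suitable hole vertices (e.g.\ $U_5\cup\{v_4\}$, $U_1\cup\{v_2,v_5\}$, $U_2\cup\{v_1,v_3\}$ when $U_3=U_4=\emptyset$); properness holds because each $U_i$ is independent and is paired only with its non-neighbors on $H$, so there are never any ``conflicts to resolve individually.'' Your suggestion to patch conflicts vertex-by-vertex (and your example of a conflict, a \emph{non}-adjacent pair, which is not a coloring conflict at all) signals that the mechanism you have in mind is not the one that works.

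More seriously, your plan presumes that every realizable pattern of nonempty parts admits such an explicit coloring, but the paper's fourth case is of a different nature: when $U_3\neq\emptyset$ and $U_5\neq\emptyset$ with $U_5=U_5^-$ (after the normalization above), no coloring is exhibited -- instead one proves the configuration \emph{cannot exist}. There, Proposition~\ref{prop:fork-free-results}(\ref{lb:fork-free-v}) forces $U_3=U_3^-$, and then, letting $u_1^+\in U_1^+$, either $\{u_1^{+}, u_3^{-}, v_4, u_5^{-}, v_2\}$ induces a fork or $u_1^+$ has three consecutive neighbors on the five-hole $u_3^- v_5 v_4 v_3 u_5^-$, contradicting \myref{}. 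Without identifying this case and its impossibility argument, the case analysis cannot be closed. Finally, your proposed ``crucial structural fact'' -- that four pairwise-complete nonempty parts would create a $K_4$ and hence collapse -- is not what drives the argument (the driving tool is the emptiness relation (\ref{lb:fork-free-v}), plus \myref{} and fork-freeness applied to auxiliary five-holes), so even as a heuristic the plan points in the wrong direction. In short: the framing is right, the bookends are right, but the core of the proof is missing and the hints you give for filling it in would not produce it.
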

\begin{proof}
    Since $G$ does not contain $K_4$, it cannot contain the complement of any odd hole longer than seven.  It does not contain $C^2_7$, which is the complement of $C_7$.  Therefore, if $G$ does not contain any odd hole, then $G$ is perfect, and
    we can use the algorithm of Chudnovsky et al.~\cite{Chudnovsky17} to find an optimal coloring.
  The chromatic number of $G$ is at most three because it is equal to the order of the maximum cliques \cite{chudnovsky-06-strong-perfect-graphs-theorem}, which is at most three because $G$ is $K_4$-free.
  Otherwise, $G$ contains an odd hole, and thus its chromatic number is at least three.  Thus, it suffices to find a three coloring, i.e., a partition of $V(G)$ into three (not necessarily maximal) independent sets.
  If $G$ is claw-free, then we can use the algorithm of Bruhn and Stein~\cite{Bruhn12} to find an optimal coloring.  In the rest, $G$ contains a claw.
  The algorithm now finds a five-hole $H$, and partition the vertex set $V(G)\setminus V(H)$ according to their adjacencies with $H$.  We may number the vertices on $H$ such that $U_1^+$ is nonempty and $U_{4}$ is empty.  This is possible because of Proposition~\ref{prop:fork-free-results}(\ref{lb:fork-free-v}) with $i=1$.

  If $U_{3}$ is empty, then we partition $V(G)$ into three independent sets $U_{5} \cup \{v_{4}\}$, $U_{1} \cup \{v_{2}, v_{5}\}$, and $U_{2} \cup \{v_{1}, v_{3}\}$.
  If $U_5$ is empty, then we partition $V(G)$ into three independent sets that are $U_{1} \cup \{v_{5}\}$, $U_{2} \cup \{v_{1}, v_{3}\}$, and $U_{3} \cup \{v_{2}, v_{4}\}$.
  In the rest, neither $U_{3}$ nor $U_5$ is empty. 
  If $U_{5}^{+}\ne\emptyset$, then $U_{2}$ is empty because of Proposition~\ref{prop:fork-free-results}(\ref{lb:fork-free-v}) with $i=5$.   We can partition $V(G)$ into three independent sets $U_{1} \cup \{v_{2}, v_{5}\}$, $U_{3} \cup \{v_{3}\}$, and $U_{5} \cup \{v_{1}, v_{4}\}$.
  The remaining case is when $U_{5} = U_{5}^{-}$, and we show that this cannot happen.
  Since neither $U_{1}$ nor $U_{5}$ is empty, $U_{3}^{+}$ is empty because of Proposition~\ref{prop:fork-free-results}(\ref{lb:fork-free-v}) with $i=3$.
  For $j=3, 5$, let $u_{j}^{-}$ be the only vertex in $U_{j}^{-}$.
  Let $u_{1}^{+}$ be an arbitrary vertex in $U_{1}^{+}$; it is adjacent to both $u_{3}^{-}$, by Proposition~\ref{prop:fork-free-results}(\ref{lb:fork-free-i}), and $u_{5}^{-}$, by Proposition~\ref{prop:fork-free-results}(\ref{lb:fork-free-ii}), both with $i=1$.
  If $u_{3}^{-}$ is not adjacent to $u_{5}^{-}$, then $\{u_{1}^{+}, u_{3}^{-}, v_{4}, u_{5}^{-}, v_{2}\}$ forms a fork; otherwise, $u_{1}^{+}$ has three consecutive neighbors on the hole $u_{3}^{-} v_{5} v_{4} v_{3} u_{5}^{-}$, contradicting Propositions~\ref{prop:neighbors-on-hole}(iii).
  The algorithm is thus complete

  All the induced subgraphs we need to check have a constant number of vertices.  Both algorithms we call~\cite{Chudnovsky17, Bruhn12} take polynomial time.  The rest is clearly doable in polynomial time.
  Thus, the whole algorithm runs in polynomial time.
\end{proof}

Theorem~\ref{thm:fork-free-coloring} directly follows from Lemma~\ref{lem:optimal-coloring} and Theorem~\ref{thm:main}.

\section{Strong t-perfection}
\label{sec:t-perfect}
The \emph{independent set polytope} of a graph is defined as the convex hull of the characteristic vectors of all independent sets in it.  For a graph $G$, we define another polytope $P(G)$
as the set of vectors $x\in \mathbb{R}^{V(G)}$ satisfying

\begin{equation}
  \label{eq:t-perfect}
  \begin{aligned}
    0 \leq x_v & \leq 1                  &  & \text{for every vertex } v,
    \\
    x_u + x_v  & \leq 1                  &  & \text{for every edge } u v,
    \\
    x(V(C))    & \leq \frac{|V(C)|-1}{2} &  & \text{for every odd cycle $C$ in } G,
  \end{aligned}
\end{equation}
where $x(S) = \sum_{v\in S} x_v$.
Clearly, the characteristic vector of every independent set of $G$ satisfies all the constraints in \eqref{eq:t-perfect}.  Thus, the independent set polytope is contained in $P(G)$.  While the other direction is not true in general:
the vector $x = (\frac{1}{3},\frac{1}{3},\frac{1}{3},\frac{1}{3})^{\mathsf{T}}$ is in $P(K_{4})$ but is not in the independent set polytope of $K_{4}$ because it does not satisfy the clique constraint.
A graph $G$ is \emph{t-perfect} if $P(G)$ is precisely the independent set polytope of $G$, and \emph{strongly t-perfect} if the system~(\ref{eq:t-perfect}) is totally dual integral.
It is well known that every strongly t-perfect graph is t-perfect~\cite{Edmonds75, schrijver-03}, while the other direction remains an open problem.
The $K_{4}$ is the smallest graph that is not t-perfect, hence not strongly t-perfect.
For a vertex $v \in V(G)$, the polytope $P(G-v)$ is the projection of the intersection of $P(G)$ and the hyperplane $x_{v} = 0$ on $\mathbb{R}^{V(G-v)}$. Therefore, t-perfection is preserved under vertex deletions.
Both t-perfection and strong t-perfection are also invariants with respect to t-contraction~\cite{gerards1998, Bruhn10}.

A graph is \emph{perfect} if it does not contain an odd hole or the complement of an odd hole.  A graph is perfect if and only if its independent set polytope can be determined by the following linear system \cite{padberg-74-perfect-matrices, chvatal-75-graph-polytopes}:
\begin{equation}
  \label{eq:perfect}
  \begin{aligned}
    x_v  &               \geq 0    &  & \text{for every vertex } v,
    \\
    x(K)   &      \leq 1             &  & \text{for every clique } K \text{ in } G.
  \end{aligned}
\end{equation}
Note that a triangle constraint is both a clique constraint and an odd-cycle constraint.  Moreover, the odd-cycle constraints in~\eqref{eq:t-perfect} can be restricted to induced odd cycles: those on non-induced ones are redundant.
In a $K_{4}$-free graph, the clique constraint of \eqref{eq:perfect} degenerates to vertex, edge, and triangle constraints of \eqref{eq:t-perfect}.
On the other hand, if $G$ is perfect, then it does not contain odd holes, and odd-cycle constraints of \eqref{eq:t-perfect} degenerates to triangle constraints.
Thus, \eqref{eq:t-perfect} and \eqref{eq:perfect} coincide for a $K_{4}$-free perfect graph.
If the system \eqref{eq:perfect} is totally dual integral, then $G$ is perfect~\cite{Edmonds75, chvatal-75-graph-polytopes}.
Lov\'{a}sz \cite{lovasz-72-perfect-graphs} showed that the converse is also true.
Therefore, the system~\eqref{eq:perfect} is totally dual integral if and only if $G$ is perfect; see also~\cite{schrijver-03} for more details.

\begin{proposition}[Folklore]
  \label{prop:perfect and t-perfect}
  Every $K_{4}$-free perfect graph is strongly t-perfect.
\end{proposition}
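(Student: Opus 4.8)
The plan is to derive the proposition from Lov\'{a}sz's theorem, quoted just above, that the clique system \eqref{eq:perfect} is totally dual integral exactly when the graph is perfect. By definition $G$ is strongly t-perfect if and only if the system \eqref{eq:t-perfect} is totally dual integral, so the whole argument has to be conducted at the level of the defining inequalities: total dual integrality is a property of the \emph{system}, not of the polytope it describes, and two systems defining the same polytope may well differ in this respect.

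First I would make the coincidence of the two systems, already noted in the text, completely explicit. Since $G$ is $K_4$-free, every clique is a vertex, an edge, or a triangle, so \eqref{eq:perfect} consists precisely of the constraints $x_v \ge 0$, the bounds $x_v \le 1$, the edge constraints, and the triangle constraints. Since $G$ is perfect it has no odd hole, so its only induced odd cycles are triangles, whose odd-cycle constraint $x(V(C)) \le (|V(C)|-1)/2$ is exactly the triangle constraint. Hence the subsystem of \eqref{eq:t-perfect} obtained by keeping only the induced odd cycles is literally identical to \eqref{eq:perfect}, and as $G$ is perfect, Lov\'{a}sz's theorem tells us this common system is totally dual integral.

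It remains to pass from this subsystem to the full system \eqref{eq:t-perfect}, which additionally contains the odd-cycle constraints of the non-induced odd cycles. As observed in the text these constraints are redundant, so they do not change the underlying polytope. The key step is then the elementary fact that adjoining to a totally dual integral system an inequality that is valid for the whole polytope preserves total dual integrality: for any integral objective $c$ with finite maximum, an integral optimal dual solution of the smaller system extends to one of the larger system by assigning the new rows weight zero, and by linear-programming duality this extension is still optimal because the primal optimum is unchanged. Adjoining all the non-induced odd-cycle constraints in this way shows that \eqref{eq:t-perfect} is totally dual integral, that is, $G$ is strongly t-perfect.

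I expect the only delicate point to be this last step, rather than the coincidence itself: one must resist the temptation to argue merely that the two systems define the same polytope, and instead verify that the extra, redundant odd-cycle constraints of \eqref{eq:t-perfect} can be adjoined to the totally dual integral system \eqref{eq:perfect} without destroying total dual integrality, which is exactly what the extension-by-zero argument provides.
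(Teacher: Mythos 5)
Your proof is correct and takes essentially the same route as the paper, which gives no formal proof but justifies the proposition in the paragraph preceding it: for a $K_4$-free perfect graph the systems \eqref{eq:t-perfect} and \eqref{eq:perfect} coincide up to the non-induced odd-cycle constraints, and Lov\'{a}sz's theorem gives total dual integrality of \eqref{eq:perfect}. Your extension-by-zero argument for adjoining the redundant non-induced odd-cycle constraints makes explicit a step the paper passes over with ``those on non-induced ones are redundant,'' and it is the right justification, since total dual integrality is a property of the system rather than of the polytope it describes.
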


Propositions~\ref{prop:neighbors-on-hole}--\ref{lem:fix-longer-hole} can be summarized as follows.
If a fork-free graph $G$ contains a claw and an odd hole and does not contain $C_{7}^{2}$, $C_{10}^{2}$, or any odd wheel as a t-minor, then every odd hole $H$ in $G$ has length five, and satisfies the following property.

 \begin{enumerate}[($\star$)]
 \item \label{p1} A vertex in $V(G) \setminus V(H)$ has either exactly two consecutive neighbors on $H$, or three nonconsecutive neighbors on $H$.  
\end{enumerate}
Interestingly, the other direction also holds true.  
The main work of this section is to establish the following lemma.  
\begin{lemma}\label{lem:main}
  Let $G$ be a fork-free graph that contains a claw and an odd hole.  The following statements are equivalent:
  \begin{enumerate}[i)]
  \item $G$ does not contain the $C_{7}^{2}$, the $C_{10}^{2}$, or any odd wheel as a t-minor.
  \item $G$ is $\{K_4, W_{5}, C^2_7, C_{10}^{2}\}$-free, and every odd hole in $G$ has length five and satisfies \myref{}.
  \item $G$ is strongly t-perfect.
  \end{enumerate}
\end{lemma}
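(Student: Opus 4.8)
The plan is to prove the three statements equivalent via the cycle i) $\Rightarrow$ ii) $\Rightarrow$ iii) $\Rightarrow$ i), putting essentially all the work into ii) $\Rightarrow$ iii). The other two implications are short. For i) $\Rightarrow$ ii), observe that $K_4 = W_3$ and $W_5$ are odd wheels and that every induced subgraph is a t-minor, so forbidding $C_7^2$, $C_{10}^2$, and all odd wheels as t-minors immediately yields $\{K_4, W_5, C_7^2, C_{10}^2\}$-freeness; the remaining assertion, that every odd hole has length five and satisfies \myref{}, is precisely the summary of Propositions~\ref{prop:neighbors-on-hole}--\ref{lem:fix-longer-hole} recorded above. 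For iii) $\Rightarrow$ i), recall that strong t-perfection implies t-perfection, which is preserved under vertex deletion and t-contraction and hence under passing to t-minors; since $C_7^2$, $C_{10}^2$, and the odd wheels are all t-imperfect, a t-perfect graph can contain none of them as a t-minor.

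For the core implication ii) $\Rightarrow$ iii) I would show the system~\eqref{eq:t-perfect} is totally dual integral directly. By linear-programming duality it suffices to exhibit, for every integral weighting $w$ of $V(G)$, an integral \emph{cover}: nonnegative integer multiplicities on the vertex constraints (each of cost $1$), the edge constraints (cost $1$), and the odd-cycle constraints (the cycle $C$ having cost $(|V(C)|-1)/2$), such that every vertex $v$ receives total multiplicity at least $w_v$ and the cover's cost equals $\alpha_w := \max\{w(S) : S \text{ independent}\}$. A sandwiching argument then finishes the job: since the characteristic vector of any independent set lies in $P(G)$ we have $\alpha_w \le \max\{w^{\mathsf T}x : x \in P(G)\}$, which by duality equals the minimum fractional cover value, which is at most the cost of our integral cover; an integral cover of cost $\alpha_w$ therefore forces equality throughout, proving both that $P(G)$ is integral and that the dual optimum is attained integrally. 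One may assume $w \ge 0$, because replacing each negative $w_v$ by $0$ leaves $\alpha_w$ unchanged (independent sets are downward closed) and only relaxes what a cover must achieve.

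To build the cover I would fix a five-hole $H = v_1 \cdots v_5$ and use the partition $V(G) = V(H) \cup U_1 \cup \cdots \cup U_5$, which is exhaustive because every outside vertex is adjacent to $H$ and satisfies \myref{}. The building blocks are the five-hole constraint on $H$ (cost $2$, covering each $v_i$ once), the triangles $\{v_{j+2}, v_{j+3}, x\}$ for $x \in U_j$ (cost $1$), and individual edge and vertex constraints; the multiplicities of these blocks are then tuned to $w$. The configurations one must consider are kept finite by Proposition~\ref{prop:fork-free-results}(\ref{lb:fork-free-v}), which guarantees that at most one of $U_{i+2}, U_{i-2}$ is nonempty whenever $U_i^+ \ne \emptyset$, so that, after orienting $H$ suitably, only a short list of patterns of nonempty parts can occur --- the same reduction that drives the coloring argument of Lemma~\ref{lem:optimal-coloring}. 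Within each pattern the completeness relations of Proposition~\ref{prop:fork-free-results}(\ref{lb:fork-free-i})--(\ref{lb:fork-free-iii}) pin down exactly which vertices can coexist in an independent set, which is what lets the maximum-weight independent set and the assembled cover agree.

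The main obstacle is this last construction: for every pattern of empty and nonempty parts $U_i^{\pm}$ and for every nonnegative integral $w$, one must choose the block multiplicities so that the cover cost is exactly $\alpha_w$, neither over- nor under-counting. Verifying that the completeness structure always permits such a choice --- in particular that no configuration forces a cover strictly more expensive than the optimal independent set --- is the delicate, case-heavy part of the argument; the structural propositions of Section~\ref{sec:preliminaries} are exactly the tools that make each case tractable.
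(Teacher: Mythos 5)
Your overall architecture --- the cycle i) $\Rightarrow$ ii) $\Rightarrow$ iii) $\Rightarrow$ i), with the two short implications handled exactly as you describe --- matches the paper. But your treatment of the core implication ii) $\Rightarrow$ iii) is a plan rather than a proof, and the plan's one deferred step is the entire mathematical content of the lemma. You set up the right dual framework (your sandwiching argument is essentially Proposition~\ref{prop:strong-t-perfection}) and the right case reduction via Proposition~\ref{prop:fork-free-results}(\ref{lb:fork-free-v}), but then you say the block multiplicities ``are then tuned to $w$'' and yourself name this tuning as the main obstacle. Nothing in the proposal shows that such a tuning exists: the completeness relations of Proposition~\ref{prop:fork-free-results} restrict which vertices can coexist in an independent set, but they do not by themselves produce, for every nonnegative integral $w$, a cover of cost exactly $\alpha_w(G)$. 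A referee would reject this as incomplete; the gap is not a routine verification but the theorem itself.

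It is worth noting that the paper does \emph{not} construct covers directly in one shot, which is a strong hint that the route you sketch is genuinely hard to execute. Instead it uses two different mechanisms. For the configurations where the nonempty $U_i^+$'s do not interact (Lemmas~\ref{lem:untangled} and~\ref{lem:fix-5-hole}), it shows $G$ arises from one of finitely many graphs of order at most ten (Figure~\ref{fig:untangled}) by vertex duplication, and invokes Benchetrit's theorem (Lemma~\ref{lem:substitution}) that strong t-perfection is preserved under duplication --- so only a finite check on small graphs is needed, with no explicit covers for general $w$. For the remaining case $U_1 \cup U_5 = \emptyset$ (Lemma~\ref{lem:sufficient-condition-for-t-perfection}), the cover is built \emph{inductively} on $w(V(G))$ via the Bruhn--Stein observation (Proposition~\ref{proposition:clique-intersects-mwis}): one exhibits, in a minimal counterexample, a clique of at most three vertices meeting every maximum-weight independent set, decrements $w$ there, and recurses; this in turn requires the complete enumeration of maximal independent sets (Propositions~\ref{prop:independent-set-2} and~\ref{prop:refined-independent-sets}) and a long case analysis over which set plays the role of the maximum-weight independent set avoiding the chosen clique. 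To repair your proof you would either have to carry out the direct multiplicity-tuning for every pattern and every $w$ --- a construction the paper's authors evidently avoided --- or adopt one of these two mechanisms, neither of which appears in your proposal.
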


Before presenting the proof of Lemma~\ref{lem:main}, we use it to prove Theorem~\ref{thm:main}.

\begin{proof} [Proof of Theorem~\ref{thm:main}]
  Since strong t-perfection implies t-perfection and $C_{7}^{2}$, $C_{10}^{2}$, and all odd wheels are t-imperfect~\cite{Bruhn12, schrijver-03}, it suffices to show that if a fork-free graph does not contain $C_{7}^{2}$, $C_{10}^{2}$, or any odd wheel as a t-minor, then it is strongly t-perfect.
  Suppose that $G$ is such a graph.
  If $G$ is claw-free, then $G$ is strongly t-perfect according to Bruhn and Stein~\cite[Theorem 2]{Bruhn10} and \cite[Theorem 3]{Bruhn12}.
  Note that the complement of $C_{7}$ is $C_{7}^{2}$, and the complement of an odd hole longer than seven contains a $K_{4}$.  If $G$ does not contain an odd hole, then $G$ is perfect, and hence strongly t-perfect by Proposition~\ref{prop:perfect and t-perfect}.
  Now that $G$ contains a claw and an odd hole, it is strongly t-perfect by Lemma~\ref{lem:main}.
\end{proof}

The rest of the section is devoted to proving Lemma~\ref{lem:main}.
By \emph{duplicating} a vertex $v$ of $G$ we introduce copies of $v$ and make them adjacent to every neighbor of $v$ in $G$.
Benchetrit~\cite{Benchetrit15} proved that the class of strongly t-perfect graphs is closed under vertex duplication.
\begin{lemma}[{\cite{Benchetrit15}}]
  \label{lem:substitution}
  The graph obtained by duplicating any vertex of a strongly t-perfect graph is strongly t-perfect.
\end{lemma}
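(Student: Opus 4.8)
The plan is to verify total dual integrality of \eqref{eq:t-perfect} for the duplicated graph directly, by turning an integral optimal dual solution of the original graph into one for the new graph. Since introducing several copies can be done one at a time, it suffices to add a single copy $v'$ of a vertex $v$; write $G'$ for the resulting graph, so that $v$ and $v'$ are nonadjacent and $N_{G'}(v')=N_G(v)$. Recall that the dual of maximizing $w^{\mathsf T}x$ over $P(G)$ asks for a minimum-cost \emph{cover} of $w$ by the available constraints: nonnegative integers $a_u$ on vertices, $b_e$ on edges, and $c_C$ on odd cycles, such that every vertex $u$ is covered to at least $w_u$, that is, $a_u+\sum_{e\ni u}b_e+\sum_{C\ni u}c_C\ge w_u$, with cost $\sum_u a_u+\sum_e b_e+\sum_C \frac{|V(C)|-1}{2}\,c_C$. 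Strong t-perfection of $G$ means exactly that for every integral $w$ this minimum is attained by an integral cover; since the right-hand sides of \eqref{eq:t-perfect} are integers, that minimum equals the maximum weight of an independent set.

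First I would dispose of the easy cases. If $w'_{v'}<0$, then an optimal solution never uses $v'$, and an integral optimal cover of $G$ (which exists as $G$ is strongly t-perfect), extended by zero on all constraints meeting $v'$, is an integral optimal cover of $G'$; the case $w'_{v}<0$ is symmetric. So assume $w'_{v},w'_{v'}\ge 0$ and define a weight $w$ on $V(G)$ by $w_v:=w'_v+w'_{v'}$ and $w_u:=w'_u$ otherwise. Because $v$ and $v'$ are nonadjacent twins, any independent set containing one of them avoids $N(v)$ and hence may contain both; it follows that the maximum weight of an independent set is the same for $(G',w')$ and $(G,w)$, since activating $v$ in $G$ collects exactly $w'_v+w'_{v'}$.

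Now take an integral optimal cover $(a,b,c)$ of $(G,w)$; its cost equals the common maximum weight of an independent set. I would transform it into a cover of $(G',w')$ of the same cost by redistributing the coverage at $v$ between $v$ and $v'$. The items meeting $v$ are the vertex term $a_v$, the edges $vn$ with $n\in N(v)$, and the odd cycles through $v$; together they contribute at least $w_v=w'_v+w'_{v'}$ to $v$. Each such item can be redirected to $v'$ without changing its cost or the coverage of any other vertex: replace an edge $vn$ by $v'n$, replace an odd cycle $\dots a\,v\,b\dots$ by the twin cycle $\dots a\,v'\,b\dots$ (an odd cycle of $G'$ since $a,b\in N(v)=N(v')$), and move units of $a_v$ to $a_{v'}$. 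Redirecting integral amounts, I keep coverage at least $w'_v$ at $v$ and push the surplus, which is at least $w'_{v'}$, onto $v'$. Every neighbor $n\in N(v)$ stays covered as before, since the rerouted edges and cycles still pass through $n$, and all other vertices are untouched.

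The resulting cover is integral, feasible for $(G',w')$, and has cost equal to the maximum weight of an independent set of $(G',w')$; by weak duality it is optimal. As $w'$ was an arbitrary integral weighting, \eqref{eq:t-perfect} is totally dual integral on $G'$, so $G'$ is strongly t-perfect. I expect the main obstacle to be the odd-cycle part of the transformation: one must check that rerouting a cycle through $v'$ yields a legitimate odd cycle of $G'$ and, crucially, that the coverage of the two cycle-neighbors of $v$ (and of every other vertex) is preserved, so that feasibility is not broken while the cost is held fixed.
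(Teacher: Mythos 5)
The paper never proves Lemma~\ref{lem:substitution}: it is imported wholesale from Benchetrit's thesis~\cite{Benchetrit15}, so there is no internal argument to compare yours against. Judged on its own, your proof is correct, and it is pitched exactly at the level of machinery the paper already has: by Proposition~\ref{prop:strong-t-perfection} it suffices to produce, for every integral $w'\ge 0$ on the duplicated graph $G'$, a $w'$-cover of cost $\alpha_{w'}(G')$, and your rerouting construction does precisely that. The three load-bearing steps all check out. First, adding copies one at a time is legitimate, since each copy is nonadjacent to $v$ and to the earlier copies, so $N(v)$ is unchanged at every stage. Second, with $w_v:=w'_v+w'_{v'}$ one indeed has $\alpha_w(G)=\alpha_{w'}(G')$, because for nonnegative weights a maximum-weight independent set of $G'$ may be assumed to contain both of the nonadjacent twins $v,v'$ or neither. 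Third, every unit of coverage at $v$ --- a vertex item, an edge $vn$, or an odd cycle $\cdots a\,v\,b\cdots$ --- can be rerouted through $v'$ at identical cost without changing the coverage of any other vertex, precisely because $N_{G'}(v')=N_G(v)$ makes $v'n$ an edge and the twin cycle an odd cycle of $G'$ of the same length (note that \eqref{eq:t-perfect} imposes constraints on all odd cycles, not only induced ones, so the rerouted cycle is a legitimate cover element); since the total coverage at $v$ is at least $w'_v+w'_{v'}$ and all multiplicities are integers, an integral split exists, and weak duality (each cover element meets any independent set in at most its cost) certifies optimality of the resulting cover. One simplification: your opening treatment of negative weights is superfluous, since Proposition~\ref{prop:strong-t-perfection} already reduces strong t-perfection to nonnegative integral weight functions, so you may assume $w'\ge 0$ from the outset and delete that case analysis.
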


The graph $G$ is $\{$fork, $K_4, W_{5}, C^2_7, C_{10}^{2}\}$-free, contains a claw and an odd hole, and every odd hole has length five and satisfies \myref{}.
We fix a five-hole $H$, and partition the vertices $V(G)\setminus V(H)$ into $U_1, \ldots, U_5$.  Each $U_{i}$ is further partitioned into $U_{i}^{+}$ and $U_{i}^{-}$.
Recall that $|U_{i}^{-}|\le 1$ by Proposition~\ref{prop:size-constraint}.
By Proposition~\ref{prop:fork-free-results}, the main uncertain adjacencies are between $U_{i}^+$ and $U_{i+1}^+$.
Thus, if only one of $U_{i}^+$'s or two nonconsecutive of them are nonempty, then the graph has a very simple structure.
Indeed, it can be obtained from one of the small graphs (of order at most ten) in Figure~\ref{fig:untangled} by vertex duplications.

\begin{lemma}\label{lem:untangled}
  If for any $i = 1, \dots, 5$, one of $U_{i}^+$ and $U_{i+1}^+$ is empty, then $G$ is strongly t-perfect.
\end{lemma}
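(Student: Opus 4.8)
The plan is to show that $G$ is obtained from a graph on at most ten vertices by vertex duplications, and then to invoke Benchetrit's closure result (Lemma~\ref{lem:substitution}). The hypothesis says precisely that the indices $i$ with $U_i^+\neq\emptyset$ form an independent set of the cycle $v_1v_2v_3v_4v_5$; since at least one $U_i^+$ is nonempty, up to the rotational symmetry of $H$ we are in one of two cases: only $U_1^+$ is nonempty, or exactly $U_1^+$ and $U_3^+$ are nonempty.

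First I would observe that each nonempty $U_i^+$ is a class of false twins. A vertex of $U_i^+$ is adjacent to $v_i$, $v_{i+2}$, $v_{i+3}$ and to no other vertex of $H$; by Proposition~\ref{prop:fork-free-results}(\ref{lb:fork-free-i},\,\ref{lb:fork-free-ii}) it is complete to $U_{i-2}\cup U_{i+2}$ and to $U_{i-1}^-\cup U_{i+1}^-$; it is nonadjacent to the rest of $U_i$ because $U_i$ is independent; and it is nonadjacent to $U_{i-1}^+\cup U_{i+1}^+$ simply because those sets are empty in the untangled case. Thus the adjacency of a vertex of $U_i^+$ depends only on $i$, so any two vertices of $U_i^+$ share the same neighborhood and are pairwise nonadjacent. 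Collapsing each nonempty $U_i^+$ to a single representative therefore yields a graph $G_0$ from which $G$ is recovered by repeatedly duplicating these representatives.

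Next I would bound the size of $G_0$. Each $U_j^-$ has at most one vertex by Proposition~\ref{prop:size-constraint}, and the emptiness supplied by Proposition~\ref{prop:fork-free-results}(\ref{lb:fork-free-v}) prunes the surviving parts: when only $U_1^+$ is nonempty it forces one of $U_3,U_4$ to be empty, and when $U_1^+$ and $U_3^+$ are both nonempty it forces $U_4=U_5=\emptyset$. A short count then shows that in either case $G_0$ has at most five vertices outside $H$, so $\lvert V(G_0)\rvert\le 10$ and $G_0$ is one of the graphs of Figure~\ref{fig:untangled}; the configurations in which some $U_j^-$ is absent are induced subgraphs of these, and strong t-perfection is preserved under vertex deletion, so they reduce to the maximal ones.

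It remains to check that each graph in Figure~\ref{fig:untangled} is strongly t-perfect, which is the technical heart. These graphs are neither claw-free (the representative of $U_1^+$ together with $v_1,v_2,v_5$ induces a claw centered at $v_1$) nor perfect (each contains the five-hole $H$), so neither the claw-free result nor Proposition~\ref{prop:perfect and t-perfect} applies; instead I would verify total dual integrality of the system~\eqref{eq:t-perfect} directly for each of the finitely many bounded-size graphs, exhibiting for every integral weighting an integral cover whose cost matches the maximum weight of an independent set. Combined with the duplication argument and Lemma~\ref{lem:substitution}, this yields that $G$ is strongly t-perfect. The main obstacle is exactly this last, case-laden verification: because the graphs interlace a claw, an odd hole, and the singleton vertices $U_j^-$, the certifying dual solutions must be produced configuration by configuration rather than by one uniform rule.
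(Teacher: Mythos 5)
Your proposal is correct and follows essentially the same route as the paper's own proof: observe that each nonempty $U_i^+$ is a class of false twins, use Proposition~\ref{prop:fork-free-results}(\ref{lb:fork-free-v}) to prune the remaining parts down to one of the bounded configurations of Figure~\ref{fig:untangled}, verify strong t-perfection of those finitely many graphs, and recover $G$ by vertex duplication via Lemma~\ref{lem:substitution}. The finite base-graph verification that you rightly flag as the remaining technical step is likewise left as a direct check in the paper (``it is easy to verify'').
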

\begin{proof}
  We may assume without loss of generality that $U_{2}^+$ is nonempty, while all of $U_{1}^+$, $U_{3}^+$, and $U_{5}^+$ are empty.
  Every vertex in $U_{2}^+$ is adjacent to $v_2, v_4$, and $v_5$ but not $v_1$ or $v_3$ by definition.

    Suppose first that $U_{4}^+$ is also nonempty.  Then both $U_{1}$ and $U_{5}$ are empty by Proposition~\ref{prop:fork-free-results}(\ref{lb:fork-free-v}).  Thus,
    \[
      V(G) \setminus  (V(H) \cup U_{2}^+ \cup U_{4}^+) = U_{2}^-\cup U_{3}^-\cup U_{4}^-.
    \]
    By Proposition~\ref{prop:fork-free-results}(i, ii), all the edges between $U_{2}^+$ and $U_{3}^-\cup U_{4}$ are present.
    Thus, all vertices in $U_{2}^+$ have the same neighborhood in $V(G)\setminus U_{2}^+$.
    A symmetric argument applies to $U_{4}^+$.
    Let $G_1$ be the graph in Figure~\ref{fig:untangled}(a), where for $i = 2, 3, 4$, the optional vertices $u_{i}^-$ exists if and only if $U_{i}^- \ne \emptyset$.
    It is easy to verify that $G_1$ is strongly t-perfect if it satisfies the condition of Lemma~\ref{lem:main}(ii).
    We duplicate $u_{2}^+$ of $G_1$ with $|U_{2}^{+}|$ vertices, and then duplicate $u_{3}^{+}$ in the resulted graph with $|U_{3}^{+}|$ vertices.  The final result is $G$.  Therefore, $G$ is strongly t-perfect by Lemma~\ref{lem:substitution}.
    
    In the rest, $U_{4}^+$ is empty.  We may assume without loss of generality that $U_{5} = \emptyset$.
    Then
    \[
      V(G) \setminus (V(H) \cup U_{2}^+) = U_{1}^-\cup U_{2}^-\cup U_{3}^-\cup U_{4}^-.
    \]
    Every vertex in $U_{2}^+$ is adjacent to $U_{1}^-\cup U_{3}^-\cup U_{4}^-$ by Proposition~\ref{prop:fork-free-results}(i, ii), and nonadjacent to $U_{2}^-$ by definition.
    Thus, all vertices in $U_{2}^+$ have the same neighborhood in $V(G)\setminus U_{2}^+$.
    Let $G_2$ be a graph of the pattern in Figure~\ref{fig:untangled}(b), where for $i = 1, 2, 3, 4$, the optional vertices $u_{i}^-$ exists if and only if $U_{i}^- \ne \emptyset$.
    It is easy to verify that $G_2$ is strongly t-perfect if it satisfies the condition of Lemma~\ref{lem:main}(ii).  We duplicate $u_{2}^+$ of $G_2$ with $|U_{2}^{+}|$ vertices.  The result is $G$.  Therefore, $G$ is strongly t-perfect by Lemma~\ref{lem:substitution}.
\end{proof}

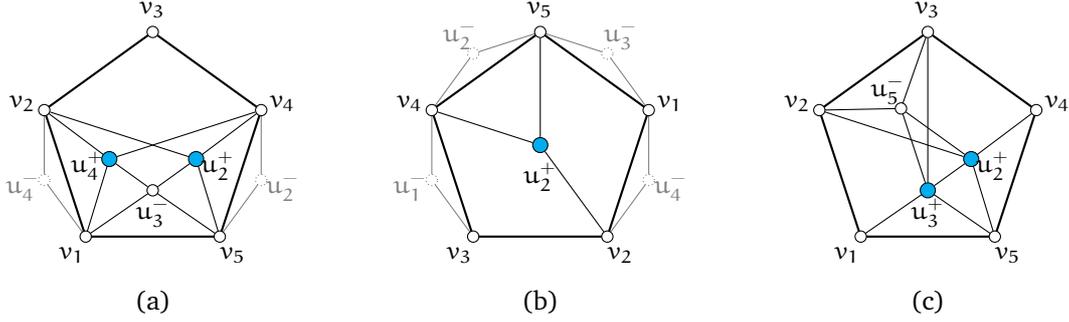
\begin{figure}[h]
  \centering
  \begin{subfigure}[b]{0.3\linewidth}
    \centering\small
    \begin{tikzpicture}[scale=1.5]
      \foreach \i in {0,..., 4} {
        \draw[thick] ({90 - (\i + 1) * (360 / 5)}:1) -- ({90 - \i * (360 / 5)}:1);
      }
      \foreach \i in {3} {
        \draw ({90 - (\i) * (360 / 5)}:1) -- ({126 - \i * (360 / 5)}:.4) node[empty vertex] (x\i) {} -- ({162 - (\i) * (360 / 5)}:1);
        \node at ({126 - \i * (360 / 5)}:.6) {$u_{\i}^-$};
      }
      \foreach \i in {2, 4} {
        \draw ({90 - (\i) * (360 / 5)}:1) -- ({126 - \i * (360 / 5)}:.4) node[filled vertex] (x\i) {} -- ({162 - (\i) * (360 / 5)}:1);
        \node at ({126 - \i * (360 / 5)}:.6) {$u_{\i}^+$};

        \draw[gray] ({90 - (\i) * (360 / 5)}:1) -- ({126 - \i * (360 / 5)}:1) node[optional vertex] (y\i) {} -- ({162 - (\i) * (360 / 5)}:1);
        \node[gray] at ({126 - \i * (360 / 5)}:1.2) {$u_{\i}^-$};
      }

      \foreach \i in {1,..., 5} {
        \node[empty vertex] (v\i) at ({306 - \i * (360 / 5)}:1) {};
        \node at ({306 - \i * (360 / 5)}:1.2) {$v_{\i}$};
      }
      \foreach \i in {2, 4} \draw (v\i) -- (x\i) -- (x3);
      \end{tikzpicture}
    \caption{}
  \end{subfigure}
  \;
  \begin{subfigure}[b]{0.3\linewidth}
    \centering\small
    \begin{tikzpicture}[scale=1.5]
      \foreach \i in {0,..., 4} {
        \draw[thick] ({90 - (\i + 1) * (360 / 5)}:1) -- ({90 - \i * (360 / 5)}:1);
      }
      \foreach \i in {1, 2, 3, 4} {
        \draw[gray] ({234 - (\i) * (360 / 5)}:1) -- ({270 - \i * (360 / 5)}:1.) node[optional vertex] (x\i) {} -- ({306 - (\i) * (360 / 5)}:1);
        \node[gray] at ({270 - \i * (360 / 5)}:1.2) {$u_{\i}^-$};
      }
      \foreach \i in {1,..., 5} {
        \node[empty vertex] (v\i) at ({90 - \i * (360 / 5)}:1) {};
        \node at ({90 - \i * (360 / 5)}:1.2) {$v_{\i}$};
      }

      \node[filled vertex, "$u_{2}^+$" below] (u) at (0, 0) {};
      \foreach \i in {2, 4, 5} \draw (u) -- (v\i);
    \end{tikzpicture}
    \caption{}
  \end{subfigure}
  \;
  \begin{subfigure}[b]{0.3\linewidth}
    \centering\small
    \begin{tikzpicture}[scale=1.5]
      \foreach \i in {0,..., 4} {
        \draw[thick] ({90 - (\i + 1) * (360 / 5)}:1) -- ({90 - \i * (360 / 5)}:1);
      }
      \foreach \i in {5} {
        \draw ({90 - (\i) * (360 / 5)}:1) -- ({126 - \i * (360 / 5)}:.4) node[empty vertex] (x\i) {} -- ({162 - (\i) * (360 / 5)}:1);
        \node at ({126 - \i * (360 / 5)}:.6) {$u_{\i}^-$};
      }
      \foreach \i in {2, 3} {
        \draw ({90 - (\i) * (360 / 5)}:1) -- ({126 - \i * (360 / 5)}:.4) node[filled vertex] (x\i) {} -- ({162 - (\i) * (360 / 5)}:1);
        \node at ({126 - \i * (360 / 5)}:.6) {$u_{\i}^+$};
       }
       \draw (x2) -- (x3);
      \foreach \i in {1,..., 5} {
        \node[empty vertex] (v\i) at ({306 - \i * (360 / 5)}:1) {};
        \node at ({306 - \i * (360 / 5)}:1.2) {$v_{\i}$};
      }
      \foreach \i in {2, 3} \draw (v\i) -- (x\i) -- (x5);
      \end{tikzpicture}
    \caption{}
  \end{subfigure}
  \caption{Three configurations for Lemma~\ref{lem:untangled} and \ref{lem:fix-5-hole}.  The dotted vertices are optional, and their edges, except to $H$, are not drawn.}
  \label{fig:untangled}
\end{figure}

Henceforth, we may assume without loss of generality that
\[
  U_{2}^{+} \ne \emptyset \text{ and } U_{3}^{+} \ne \emptyset.
\]
By Proposition~\ref{prop:fork-free-results}(\ref{lb:fork-free-v}) with $i=2$, at least one of $U_4$ and $U_5$ is empty.
For the same reason, at least one of $U_1$ and $U_5$ is empty.
We note that if neither $U_{1}\cup U_{5}$ nor $U_{4} \cup U_{5}$ is empty, then $U_{2}^{+}$ is complete to $U_{3}^{+}$, and the situation is similar to Lemma~\ref{lem:untangled}.

\begin{lemma}
  \label{lem:fix-5-hole}
  If neither $U_{1}\cup U_{5}$ nor $U_{4} \cup U_{5}$ is empty, then $G$ is strongly t-perfect.
\end{lemma}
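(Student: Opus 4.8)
The plan is to reduce the hypothesis to two concrete configurations, prove in each that $U_2^+$ is complete to $U_3^+$, and then finish exactly as in Lemma~\ref{lem:untangled} by a vertex-duplication argument. Recall that $U_2^+,U_3^+\neq\emptyset$, that at least one of $U_4,U_5$ is empty, and that at least one of $U_1,U_5$ is empty. Under the assumption that neither $U_1\cup U_5$ nor $U_4\cup U_5$ is empty, these constraints leave only two possibilities: either (a)~$U_5\neq\emptyset$, which forces $U_1=U_4=\emptyset$; or (b)~$U_5=\emptyset$ while $U_1\neq\emptyset$ and $U_4\neq\emptyset$. In both cases the goal is the single structural statement that every $x\in U_2^+$ is adjacent to every $y\in U_3^+$.

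The heart of the proof, and the step I expect to be the main obstacle, is establishing this completeness. I would argue by contradiction: fix $x\in U_2^+$ and $y\in U_3^+$ with $x\not\sim y$, and observe that then $v_2,x,v_5,y,v_3$ induce a five-hole $H^\ast$ (the only candidate chord is $xy$, excluded by assumption, while the remaining pairs are nonadjacent by the definition of $U_2$ and $U_3$). Since every odd hole of $G$ is a $C_5$ obeying \myref{}, no vertex may have four or five neighbors on $H^\ast$; the strategy is to exhibit exactly such a vertex. In case~(a), a vertex $w\in U_5$ is adjacent to $v_2$ and $v_3$ by definition, and to both $x$ and $y$ by Proposition~\ref{prop:fork-free-results}(\ref{lb:fork-free-i}); hence $w$ has the four neighbors $v_2,x,y,v_3$ on $H^\ast$ (five, and thus a $W_5$, if $w\in U_5^+$), contradicting \myref{} for $H^\ast$. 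In case~(b) there is no common neighbor of $x$ and $y$ of this kind, so I would instead pick $a\in U_1$ and $b\in U_4$, use Proposition~\ref{prop:fork-free-results}(\ref{lb:fork-free-i},\ref{lb:fork-free-ii}) to pin down the edges $xb$, $ya$, $ab$ and the relevant adjacencies to $v_1,v_4$, and build a second induced five-hole on $\{x,v_5,y,a,b\}$ (or on $\{a,v_4,v_5,v_1,b\}$) on which one of $v_1$, $v_4$, or $x$ has four consecutive neighbors, again contradicting \myref{}.

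The bookkeeping in case~(b) is the delicate part: the precise hole to use, and which of $v_1,v_4,x$ is the four-neighbor witness, depends on whether the chosen vertices of $U_1$ and $U_4$ lie in the ``$+$'' or the ``$-$'' part, equivalently on the extra adjacencies $x\sim a$ and $y\sim b$, which Proposition~\ref{prop:fork-free-results}(\ref{lb:fork-free-ii}) already forces whenever $a\in U_1^-$ or $b\in U_4^-$. Consequently a short sub-case analysis is unavoidable, but in every branch the conclusion is the same forbidden pattern, a four- or five-neighbor vertex on an induced $C_5$, so no genuinely new idea is needed beyond the one displayed in case~(a).

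With completeness in hand, the remainder mirrors Lemma~\ref{lem:untangled}. By Proposition~\ref{prop:fork-free-results}(\ref{lb:fork-free-i}--\ref{lb:fork-free-iii}), together with $U_2^+$ being complete to $U_3^+$ and the independence of each $U_i$, every vertex of $U_2^+$ has the same neighborhood in $V(G)\setminus U_2^+$, and likewise for $U_3^+$ (and for $U_5^+$ in case~(a)). Consequently $G$ is obtained from one of a constant number of templates of order at most ten, of the same flavor as those in Figure~\ref{fig:untangled} but now with the representative of $U_2^+$ complete to that of $U_3^+$, by duplicating the representative vertices of $U_2^+$, $U_3^+$ (and $U_5^+$). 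Each template satisfies the conditions of Lemma~\ref{lem:main}(ii) and, having bounded size, is verified directly to be strongly t-perfect; Lemma~\ref{lem:substitution} then lifts this to $G$, completing the proof.
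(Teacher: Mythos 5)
Your key step in case (a) coincides with the paper's: if $x\in U_2^+$ and $y\in U_3^+$ were nonadjacent, then $v_2\,x\,v_5\,y\,v_3$ would be a five-hole on which any vertex of $U_5$ has four neighbors (it is adjacent to $v_2,v_3$ by definition and to $x,y$ by Proposition~\ref{prop:fork-free-results}(\ref{lb:fork-free-i}) with $i=2$ and $i=3$), contradicting \myref{}; and the template-plus-duplication finish via Lemma~\ref{lem:substitution} is also the paper's scheme. However, there is a genuine gap: completeness of $U_2^+$ to $U_3^+$ is \emph{not} enough structure for the final step, and your claim that ``each template satisfies the conditions of Lemma~\ref{lem:main}(ii) and \dots is verified directly to be strongly t-perfect'' is false for several templates your reduction permits. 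In case (a) you must additionally show $U_2^-=U_3^-=\emptyset$ (and $U_5^+=\emptyset$). If, say, $U_2^-=\{u_2^-\}$ survived, your template would contain $u_2^-$, $u_5^-\in U_5$, and $u_3^+$; but then $u_5^-\,u_2^-\,v_5\,v_1\,v_2$ is a five-hole (the edge $u_5^-u_2^-$ comes from Proposition~\ref{prop:fork-free-results}(\ref{lb:fork-free-i}) with $i=2$), and $u_3^+$ has the four consecutive neighbors $u_5^-, u_2^-, v_5, v_1$ on it, so by Proposition~\ref{prop:neighbors-on-hole}(iv) that template has $K_4$ as a t-minor and is \emph{not} strongly t-perfect --- the direct verification you appeal to would simply fail. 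The only way out is to show that this configuration violates \myref{} and hence cannot occur in $G$ at all; that argument is exactly what is missing from your write-up (and what the paper supplies).

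The same problem is fatal in your case (b). There, Proposition~\ref{prop:fork-free-results}(\ref{lb:fork-free-v}) with $i=1$ and $i=4$ forces $U_1=\{u_1^-\}$ and $U_4=\{u_4^-\}$; a fork argument on $\{u_3^+, u_1^-, v_5, u_4^-, v_2\}$ then forces the edge $u_1^-u_4^-$, and $u_3^+$ ends up with the four neighbors $u_1^-, u_4^-, v_1, v_5$ on the five-hole $u_1^-\,u_4^-\,v_1\,v_5\,v_4$. Note that this contradiction of \myref{} is unconditional: it nowhere uses a nonadjacent pair $x,y$. So case (b) is not a configuration to be reduced to templates after proving completeness --- it is impossible outright, and every ``template'' it would produce is t-imperfect by the same $K_4$-t-minor argument as above, so your verification step cannot succeed there. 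The correct move, which the paper makes, is to prove $U_5\neq\emptyset$ first (eliminating case (b) entirely), and then to pin down case (a) all the way to $V(G)\setminus V(H)=U_2^+\cup U_3^+\cup\{u_5^-\}$ with the three parts pairwise complete, which matches the single template of Figure~\ref{fig:untangled}(c).
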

\begin{proof}
  We first argue that $U_{5}\ne \emptyset$.  
  Suppose otherwise, then neither $U_{1}$ nor $U_{4}$ is empty. Since both $U_{3}$ and $U_{4}$ are nonempty, $U_{1}^{+}$ is empty by Proposition~\ref{prop:fork-free-results}(\ref{lb:fork-free-v}) with $i = 1$. By symmetric, $U_{4}^{+}$ is empty. Therefore $U_{1} = U_{1}^{-}$ and $U_{4} = U_{4}^{-}$. Let $u_{1}^{-}$ and $u_{4}^{-}$ be the only vertex in $U_{1}^{-}$ and $U_{4}^{-}$, respectively.
  By Proposition~\ref{prop:fork-free-results}(\ref{lb:fork-free-ii}) with $i = 3$, the vertex $u_{3}^{+}$ is adjacent to $u_{4}^{-}$.
  By Proposition~\ref{prop:fork-free-results}(\ref{lb:fork-free-i}) with $i = 3$, the vertex $u_{3}^{+}$ is adjacent to $u_{1}^{-}$.
  Since $\{u_{3}^{+}, u_{1}^{-},v_{5},u_{4}^{-},v_{2}\}$ cannot form a fork, $u_{4}^{-}$ must be adjacent to $u_{1}^{-}$.
  But then $u_{1}^{-}u_{4}^{-}v_{1}v_{5}v_{4}$ is a five-hole in $G$ and $u_{3}^{+}$ has three consecutive neighbors $u_{4}^{-}$, $v_{1}$, and $v_{5}$ on it, contradicting \myref{}.

  Since neither $U_{2}$ nor $U_{3}$ is empty, $U_{5}^{+}$ is empty by Proposition~\ref{prop:fork-free-results}(\ref{lb:fork-free-v}) with $i = 5$.  Thus, $U_{5}^{-}$ is nonempty; let $u_{5}^{-}$ be its only vertex.
  Applying Proposition~\ref{prop:fork-free-results}(\ref{lb:fork-free-i}) twice, with $i = 2,3$, respectively, we can conclude that $u_{5}^{-}$ is adjacent to all the vertices in $U_{2}\cup U_{3}$.  
  We then argue that $U_{2}^{+}$ is complete to $U_{3}^{+}$.
  We take an arbitrary vertex $u_{2}^{+}$ from $U_{2}^{+}$ and an arbitrary vertex $u_{3}^{+}$ from $U_{3}^{+}$.
  If $u_{2}^{+}$ is not adjacent to $u_{3}^{+}$, then $u_{2}^{+} v_{2} v_{3} u_{3}^{+} v_{5}$ is a hole in $G$ on which $u_{5}^{-}$ has four neighbors.
  Thus, $U_{2}^{+}$ is complete to $U_{3}^{+}$. 
  We next argue that $U_{2}^{-}$ is empty.  Suppose otherwise and let $u_{2}^{-}$ be the only vertex in $U_{2}^{-}$.
  Note that $u_{5}^{-}$ is adjacent to $u_{2}^{-}$.
  Therefore,  $u_{5}^{-}u_{2}^{-}v_{5}v_{1}v_{2}$ is a hole in $G$.
  But then, $u_{3}^{+}$ has three consecutive neighbors $u_{2}^{-}$, $v_{5}$, and $v_{1}$ on the hole, contradicting \myref{}.
  Thus, $U_{2}^{-} = \emptyset$.  A symmetric argument implies $U_{3}^{-}$ is empty as well.
  Since both $U_{2}^{+}$ and $U_{5}^{-}$ are nonempty, $U_{4}$ is empty by Proposition~\ref{prop:fork-free-results}(\ref{lb:fork-free-v}) with $i = 2$.
  Therefore, $V(G) \setminus V(H) = U_{2}^{+}  \cup U_{3}^{+} \cup U_{5}^{-}$, and the three parts $U_{2}^{+}$, $U_{3}^{+}$, and $U_{5}^{-}$ are pairwise complete with each other. 
  Let $G_1$ be the graph in Figure~\ref{fig:untangled}(c).  It is easy to verify that $G_1$ is strongly t-perfect.  We duplicate $u_{2}^+$ of $G_1$ with $|U_{2}^{+}|$ vertices, and then duplicate $u_{3}^{+}$ in the resulted graph with $|U_{3}^{+}|$ vertices.  The final result is $G$.  Therefore, $G$ is strongly t-perfect by Lemma~\ref{lem:substitution}.
\end{proof}

In the rest, at least one of $U_{1}\cup U_{5}$ and $U_{4} \cup U_{5}$ is empty.
We may assume that $U_{1}\cup U_{5} = \emptyset$; otherwise, we can renumber the vertices on $H$. 
We have seen all the maximal independent sets that contains two vertices from $H$ in Proposition~\ref{prop:independent-set-2}.
The following lists other maximal independent sets under this condition.

\begin{proposition}
  \label{prop:refined-independent-sets}
  If $U_{1} \cup U_{5}$ is empty, then a maximal independent set $S$ of $G$ that contains at most one vertex from $H$ is either
  \begin{enumerate}[i)]
  \item $U_{j}^{-} \cup \{v_{j}\}$ for some $j = 2, 3, 4$; or
  \item a pair of nonadjacent vertices $x\in U_{3}^+$ and $y\in U_{2}^+\cup U_{4}^+$.
  \end{enumerate}
\end{proposition}
\begin{proof}
  i)  Suppose first that there is one vertex in $S\cap V(H)$.  We first excludes $v_1$ and $v_5$.
  Suppose that $v_{1} \in S$.  By definition, $U_{3} \cup U_{4}$ is disjoint from $S$.  Thus, $S\subseteq U_{2} \cup \{v_{1}\}$ and cannot be maximal.  Likewise, $v_{5}\in S$ implies $S\subseteq U_{4} \cup \{v_{5}\}$.
  \begin{itemize}
  \item Case 1, $v_{2} \in S$.  Then $S \setminus \{v_{2}\}\subseteq U_{2}^{-} \cup U_{3}$.  Since $U_{3}\cup \{v_{2}, v_{4}\}$ is an independent set, $U_{2}^{-}$ cannot be empty, and its only vertex must be in $S$.  It remains to argue that the vertex in $U_{2}^{-}$ is adjacent to all the vertices in $U_{3}$.  We call Proposition~\ref{prop:fork-free-results}(\ref{lb:fork-free-ii}) with $i = 3$.
  \item Case 2, $v_{3} \in S$.  Then $S \setminus \{v_{3}\}\subseteq U_{2} \cup U_{3}^{-} \cup U_{4}$.
    Since $U_{2}$ is complete to $U_{4}$ by Proposition~\ref{prop:fork-free-results}(\ref{lb:fork-free-i}) with $i = 2$, one of $S\cap U_{2}$ and $S\cap U_{4}$ is empty.  Since $U_2\cup \{v_1, v_3\}$ and $U_4\cup \{v_3, v_5\}$ are independent sets, by the maximality of $S$, there is a vertex in $U_{3}^{-}\cap S$.      
        By Proposition~\ref{prop:fork-free-results}(\ref{lb:fork-free-ii}) with $i = 2$, the vertex in $U_{3}^{-}$ is adjacent to all the vertices in  $U_{2}$.
        Moreover, the vertex in $U_{3}^{-}$ is adjacent to all the vertices in  $U_{4}$ by Proposition~\ref{prop:fork-free-results}(\ref{lb:fork-free-iii}) with $i = 2$ when $U_{4}^{+}=\emptyset$, or by Proposition~\ref{prop:fork-free-results}(\ref{lb:fork-free-ii}) with $i = 4$ otherwise.

  \item Case 3, $v_{4} \in S$.
    Then $S \setminus \{v_{4}\}\subseteq U_{4}^{-} \cup U_{3}$, and the argument is similar to that of case 1.
  \end{itemize}
  
  ii) Now suppose that $S$ is disjoint from $V(H)$.  By assumption, $V(G)\setminus V(H) = U_{2} \cup U_{3} \cup U_{4}$.
  We first argue that
  \[
    S\subseteq U_{2}^+ \cup U_{3}^+ \cup U_{4}^+.
  \]
  For $j = 2, 3, 4$, let $x_j$ be the vertex in $U_{j}^-$ if $U_{j}\ne U_{j}^+$.  Applying Proposition~\ref{prop:fork-free-results}(\ref{lb:fork-free-i}, \ref{lb:fork-free-iii}) with $i=2$, we can conclude that $x_{2}$, $x_{3}$, and $x_{4}$ are pairwise adjacent, when they exist.  Therefore, at most one of them is in $S$.
  On the other hand, if $x_j \in S$ for $j = 2, 3, 4$, then $S\subseteq U_j$ by Proposition~\ref{prop:fork-free-results}(\ref{lb:fork-free-i}, \ref{lb:fork-free-ii}).
  Since this contradicts the maximality of $S$, we must have $S\subseteq U_{2}^+ \cup U_{3}^+ \cup U_{4}^+$.
  Since $U_{2}^{+}$ is complete to $U_{4}^{+}$ by Proposition~\ref{prop:fork-free-results}(\ref{lb:fork-free-i}) with $i=2$, the set $S$ is a subset of either $U_{2}^{+} \cup U_{3}^{+}$ or $U_{3}^{+} \cup U_{4}^{+}$.  
  By Proposition~\ref{prop:fork-free-results}(\ref{lb:fork-free-iv}) (with $i = 3$), each vertex in $U_{3}^{+}$ has at most one non-neighbor in $U_{2}^{+}$ and at most one non-neighbor in $U_{4}^{+}$.  For the same reason, each vertex in $U_{2}^{+}$ or $U_{4}^{+}$ has at most one non-neighbor in $U_{3}^{+}$.  Thus, $S$ is a pair of nonadjacent vertices $x\in U_{3}^+$ and $y\in U_{2}^+\cup U_{4}^+$.
\end{proof}

The final step of the proof relies on the duality of linear programming, for which we need to recall some known results.
For each weight function $w:V(G) \rightarrow \mathbb{Z}_{\geq 0}$, we can make a linear program out of \eqref{eq:t-perfect} by adding an objective function $\max \sum_v w(v) x_v$.  The dual of this linear program is a covering problem.
A \emph{$w$-cover} is a family of vertices, edges, and odd cycles in $G$ such that every vertex $v$ in $V(G)$ lies in at least $w(v)$ elements, with repetition allowed.
The \emph{cost} of a $w$-cover is the sum of the costs of its elements, where the cost of a vertex or an edge is one, and the cost of an odd cycle $C$ is $(|C|-1)/2$.
For a vertex set $S$, we use $w(S)$ to denote $\sum_{v \in S} w(v)$.  We use $\alpha_{w}(G)$ to denote the maximum value of $w(S)$, with $S$ ranging over all independent sets of $G$.
The following is a consequence of linear programming duality.

\begin{proposition}[{\cite{schrijver-03}}]
  \label{prop:strong-t-perfection}
  A graph $G$ is \emph{strongly} t-perfect if and only if there exists a $w$-cover of cost $\alpha_{w}(G)$ for every weight function $w:V(G) \rightarrow \mathbb{Z}_{\geq 0}$.
\end{proposition}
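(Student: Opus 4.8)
The plan is to derive the statement from linear programming duality applied to the system~\eqref{eq:t-perfect}. For a fixed weight $w:V(G)\to\mathbb{Z}_{\ge 0}$, consider the primal linear program $\max\{\sum_v w(v) x_v : x\in P(G)\}$. Its dual assigns a nonnegative variable to each constraint of~\eqref{eq:t-perfect}: a variable $y_v$ to each bound $x_v\le 1$, a zero-cost slack $s_v$ to each $x_v\ge 0$, a variable $z_e$ to each edge constraint, and a variable $u_C$ to each odd-cycle constraint; the dual minimizes $\sum_v y_v+\sum_e z_e+\sum_C \frac{|V(C)|-1}{2}u_C$ subject to $y_v+\sum_{e\ni v}z_e+\sum_{C\ni v}u_C\ge w(v)$ for every vertex $v$. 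Reading $y_v,z_e,u_C$ as multiplicities, an \emph{integral} dual solution is exactly a $w$-cover whose cost equals the dual objective. Thus the assertion is the specialization of total dual integrality of~\eqref{eq:t-perfect} to nonnegative integral objectives, and I would prove both directions through this dictionary.

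For the forward direction, assume that~\eqref{eq:t-perfect} is totally dual integral. The right-hand side of~\eqref{eq:t-perfect} is integral---here I would stress that $(|V(C)|-1)/2$ is an integer precisely because $|V(C)|$ is odd---so by the theorem of Edmonds and Giles~\cite{Edmonds75} the polytope $P(G)$ is integral; its vertices are $0/1$ and, by the edge constraints, incidence vectors of independent sets, whence $\max\{w^{\mathsf T}x:x\in P(G)\}=\alpha_{w}(G)$. Total dual integrality now yields an integral optimal dual solution, which is a $w$-cover of cost equal to this common optimum $\alpha_{w}(G)$.

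For the converse, suppose that for every $w\ge 0$ there is an integral $w$-cover of cost $\alpha_{w}(G)$. Such a cover is a feasible dual solution, so by weak duality its cost $\alpha_{w}(G)$ is at least the primal optimum $\max\{w^{\mathsf T}x:x\in P(G)\}$; since every independent-set vector lies in $P(G)$, the primal optimum is also at least $\alpha_{w}(G)$. Hence the primal optimum equals $\alpha_{w}(G)$ and the cover is an integral \emph{optimal} dual solution, which establishes the integral-dual property for all nonnegative integral objectives. To obtain full total dual integrality I would then reduce an arbitrary integral objective $c$ to the nonnegative case using the box constraints: for each $v$ with $c_v<0$ some optimal primal solution sets $x_v=0$ (decreasing $x_v$ both raises the objective and loosens every constraint in which $x_v$ appears), so the zero-cost slacks $s_v$ of the constraints $x_v\ge 0$ allow one to extend an integral optimal cover for the clipped weight $\max(c,0)$ to an integral optimal dual solution for $c$.

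The main obstacle I anticipate is exactly this last reduction. The clean part is the nonnegative case, where weak duality immediately upgrades a cover of the right cost to an optimal dual solution. The delicate point is that an edge or odd cycle covering a positively weighted vertex may unavoidably also cover a negatively weighted one, so one must argue that the zero-cost slacks absorb such excess without destroying integrality; this is the one place where the specific box structure of~\eqref{eq:t-perfect} is essential, and it is what makes the nonnegative-weight formulation of the proposition equivalent to total dual integrality in the usual sense~\cite{schrijver-03}.
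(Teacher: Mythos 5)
Your proof is correct and follows exactly the route the paper gestures at: the paper states this proposition without proof, citing Schrijver and calling it ``a consequence of linear programming duality,'' and your argument is precisely that standard derivation --- identifying integral dual solutions of~\eqref{eq:t-perfect} with $w$-covers, using Edmonds--Giles for the forward direction, and reducing arbitrary integral objectives to nonnegative ones via the zero-cost dual variables of the constraints $x_v \geq 0$. The clipping step you flag as delicate is in fact sound as you set it up, since any over-coverage of a vertex with $c_v < 0$ is absorbed by its slack variable at zero cost, so no gap remains.
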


The following observation of Bruhn and Stein~\cite{Bruhn10} is very helpful in our checking the condition of Proposition~\ref{prop:strong-t-perfection}.  We provide a proof for the sake of completeness.
Note that a vertex set $K$ intersects every maximum-weight independent set of $G$ if and only if $\alpha_w(G - K) < \alpha_w(G)$.
\begin{proposition}[\cite{Bruhn10}]
  \label{proposition:clique-intersects-mwis}
  Let $G$ be a graph and $w: V(G) \rightarrow \mathbb{Z}_{\geq 0}$ a weight function.   There exists a $w$-cover of $G$ with cost $\alpha_{w}(G)$ if
  \begin{itemize}
  \item there exists a clique $K$ of at most three vertices such that $\alpha_w(G - K) < \alpha_w(G)$; and
  \item for any weight function $w': V(G) \to \mathbb{Z}_{\geq 0}$ such that $w'(V(G)) < w(V(G))$, there exists a $w'$-cover of cost $\alpha_{w'}(G)$.
  \end{itemize}
\end{proposition}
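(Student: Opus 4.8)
The plan is an induction on the total weight $w(V(G))$, peeling off a single copy of the clique $K$ as one cover element and invoking the second hypothesis as the induction step. The structural reason this works is that a clique of at most three vertices is precisely a vertex, an edge, or a triangle, and each is a legitimate element of a $w$-cover of cost one (a triangle is an odd cycle of length three, of cost $(3-1)/2 = 1$). So I would reduce the weight by one on $K$, recurse, and then glue the element $K$ back on.

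First I would fix the reduced weight. Let $K^{+} = \{v \in K : w(v) \ge 1\}$ and set $w'(v) = w(v) - 1$ for $v \in K^{+}$ and $w'(v) = w(v)$ otherwise; equivalently $w'(v) = \max(0, w(v) - 1)$ on $K$ and $w' = w$ off $K$. This keeps $w'$ nonnegative and integral. Before anything else I would check $K^{+} \ne \emptyset$: if every vertex of $K$ had weight zero, then taking a maximum-weight independent set $I$ that meets $K$ and deleting $I \cap K$ would produce an independent set of $G - K$ of the same weight $\alpha_w(G)$, contradicting $\alpha_w(G - K) < \alpha_w(G)$. Hence $w'(V(G)) = w(V(G)) - |K^{+}| < w(V(G))$, so the second hypothesis is applicable to $w'$.

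The crux is the identity $\alpha_{w'}(G) = \alpha_w(G) - 1$, and this is exactly where the two hypotheses on $K$ are spent. For the lower bound, a $w$-maximum independent set $I$ must meet $K$, and in fact must meet $K^{+}$ (otherwise $I \setminus K$ would be an independent set of $G - K$ of weight $\alpha_w(G)$); since $K$ is a clique, $|I \cap K| = 1$, so $w'(I) = w(I) - 1 = \alpha_w(G) - 1$. For the upper bound I would take an arbitrary independent set $I$ and split into cases: if $I$ meets $K^{+}$ then $w'(I) = w(I) - 1 \le \alpha_w(G) - 1$; if $I$ avoids $K^{+}$ then $w'(I) = w(I) = w(I \setminus K) \le \alpha_w(G - K) \le \alpha_w(G) - 1$, using that the weight carried by any $K$-vertex in $I$ is zero. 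Combining the two bounds gives the identity.

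Finally I would assemble the cover. By the second hypothesis there is a $w'$-cover $\mathcal{C}'$ of cost $\alpha_{w'}(G) = \alpha_w(G) - 1$; adjoining the single element $K$ of cost one yields a family of total cost $\alpha_w(G)$. It is a valid $w$-cover because every $v \in K^{+}$ is covered at least $(w(v) - 1) + 1 = w(v)$ times, every $v \in K \setminus K^{+}$ is covered at least once which suffices since $w(v) = 0$, and every $v \notin K$ retains its $w' = w$ coverage from $\mathcal{C}'$. I expect the only genuine obstacle to be the bookkeeping around zero-weight vertices of $K$: one must restrict to $K^{+}$ both to keep $w'$ nonnegative and to force each optimum to shed exactly one unit of weight, and showing that $K^{+}$ is nonempty and is met by every optimum is precisely where the assumption $\alpha_w(G - K) < \alpha_w(G)$ is needed.
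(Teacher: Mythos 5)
Your proof is correct and follows essentially the same route as the paper's: decrement $w$ by one on the clique, invoke the second hypothesis to obtain a $w'$-cover, and adjoin $K$ itself (a vertex, edge, or triangle, each of cost one) to reach cost $\alpha_{w'}(G)+1=\alpha_w(G)$. The only difference is bookkeeping around zero-weight vertices---the paper shrinks $K$ to an inclusion-wise minimal subclique, which forces every weight on it to be positive, whereas you keep $K$ and decrement only on $K^{+}$; your write-up also makes explicit the identity $\alpha_{w'}(G)=\alpha_w(G)-1$ that the paper uses without proof.
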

\begin{proof}
  We may assume without loss of generality that $K$ is inclusion-wise minimal satisfying $\alpha_w(G - K) < \alpha_w(G)$.
  As a result, $w(v) > 0$ for each $v\in K$: a vertex of zero weight has no impact on $\alpha_w(G)$.  We can define another weight function $w': V(G)\to \mathbb{Z}_{\geq 0}$ by setting

  \begin{equation*}
    w'(v) =
    \begin{cases}
      w(v)-1 & v \in K, \\
      w(v) & \text{otherwise.}
    \end{cases}
  \end{equation*}
  Since $w'(V(G)) < w(V(G))$, there exists a $w'$-cover $\mathcal{K}$ of cost $\alpha_{w'}(G)$ by assumption.
  Since $|K| \le 3$, the set $\mathcal{K} \cup \{K\}$ is a $w$-cover of $G$ and its cost is $\alpha_{w'}(G) + 1 = \alpha_{w}(G)$.
\end{proof}

\begin{lemma}
  \label{lem:sufficient-condition-for-t-perfection}
  If $ U_{1} \cup U_{5}$ is empty, then $G$ is strongly t-perfect.
\end{lemma}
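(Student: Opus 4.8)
The plan is to apply the linear-programming duality of Proposition~\ref{prop:strong-t-perfection} and argue by induction on $w(V(G))$ that every weight function $w\colon V(G)\to\mathbb{Z}_{\ge0}$ admits a $w$-cover of cost $\alpha_w(G)$. The case $w\equiv0$ is settled by the empty family, so I assume $w(V(G))>0$. By Proposition~\ref{proposition:clique-intersects-mwis}, whose second hypothesis is precisely the induction hypothesis, it then suffices to exhibit a single clique $K$ with $|K|\le3$ that meets every maximum-weight independent set, i.e.\ $\alpha_w(G-K)<\alpha_w(G)$. Because $U_1\cup U_5=\emptyset$, we have $V(G)=V(H)\cup U_2\cup U_3\cup U_4$, and Propositions~\ref{prop:independent-set-2} and~\ref{prop:refined-independent-sets} list all maximal independent sets, hence all candidates for the maximum. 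I separate them into the five sets of a collection $\mathcal{B}$ that meet $V(H)$ in two vertices and the sets that meet $V(H)$ in at most one vertex: the three \emph{small} sets $\{v_j\}\cup U_j^-$ ($j=2,3,4$) and the nonadjacent pairs $\{x,y\}$ with $x\in U_3^+$ and $y\in U_2^+\cup U_4^+$.

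First I would record three families of triangles, each of which meets all five sets of $\mathcal{B}$: $\{v_1,v_5,u\}$ for $u\in U_3$, $\{v_4,v_5,u\}$ for $u\in U_2$, and $\{v_1,v_2,u\}$ for $u\in U_4$ (these are cliques because every vertex of $U_3$ sees both $v_1$ and $v_5$, and so on). A triangle from the first family additionally meets any small set or nonadjacent pair that contains its chosen $U_3$-vertex $u$; the other two families behave symmetrically for $U_2$ and $U_4$. Consequently, if every maximum-weight independent set is one of the five sets of $\mathcal{B}$, or if exactly one further maximum-weight set appears, I can choose $u$ inside that extra set and finish the inductive step with a single triangle. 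In particular the troublesome set $\{v_3\}\cup U_3^-$ is absorbed by taking $u$ to be the unique vertex of $U_3^-$.

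The real work, and the main obstacle, is to cope with the situation where two or more of the ``at most one vertex'' sets tie for the maximum, since then no single hole-triangle reaches all of them. Here I would lean on two structural facts. First, Proposition~\ref{prop:fork-free-results}(\ref{lb:fork-free-iv}) says that the non-adjacencies between $U_3^+$ and $U_2^+$, and between $U_3^+$ and $U_4^+$, form matchings; hence two distinct maximum nonadjacent pairs that share their $U_3^+$-endpoint have their two $U^+$-endpoints on opposite sides, which are adjacent by Proposition~\ref{prop:fork-free-results}(\ref{lb:fork-free-i}), so the triangle $\{v_2,a,b\}$ with $a\in U_2^+$ and $b\in U_4^+$ catches both, while two disjoint pairs are joined by a cross edge. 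Second, the vertices of $U_2^-\cup U_3^-\cup U_4^-$ are pairwise adjacent by Proposition~\ref{prop:fork-free-results}(\ref{lb:fork-free-i},\ref{lb:fork-free-ii},\ref{lb:fork-free-iii}), so the small sets can be hit simultaneously through this triangle. Fork-freeness and \myref{}, used exactly as in Lemmas~\ref{lem:untangled} and~\ref{lem:fix-5-hole}, supply these adjacencies.

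What makes the bookkeeping terminate is the three \emph{large} sets $\{v_1,v_3\}\cup U_2$, $\{v_2,v_4\}\cup U_3$, and $\{v_3,v_5\}\cup U_4$, each containing an entire $U_i$ and therefore weighing at least as much as any pair of distinct vertices drawn from that $U_i$. Weighing a putative collection of tied maxima against these three sets eliminates the genuinely incompatible combinations: for example, if $\{v_1,v_4\}$, $\{v_2\}\cup U_2^-$, and $\{v_3\}\cup U_3^-$ were all of maximum weight, then comparing with $\{v_1,v_3\}\cup U_2$ and $\{v_2,v_4\}\cup U_3$ forces $w(v_1)=w(v_4)=0$, whence $\alpha_w(G)=w(v_1)+w(v_4)=0$, contradicting $w(V(G))>0$. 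I expect the bulk of the proof to be this finite-but-delicate enumeration of which families of maximum-weight independent sets can coexist and which tailored clique then meets all of them; once it is complete, Proposition~\ref{proposition:clique-intersects-mwis} closes the induction and $G$ is strongly t-perfect.
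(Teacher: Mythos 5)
Your skeleton is the same as the paper's (Proposition~\ref{prop:strong-t-perfection} plus Proposition~\ref{proposition:clique-intersects-mwis}, with induction on $w(V(G))$ reducing everything to exhibiting a clique of at most three vertices that meets every maximum-weight independent set), but there is a genuine gap at the step ``Propositions~\ref{prop:independent-set-2} and~\ref{prop:refined-independent-sets} list all maximal independent sets, hence all candidates for the maximum.'' This is only valid if $w$ is strictly positive, which you never establish and which your framework cannot establish. If some vertex has weight zero, a maximum-weight independent set need not be maximal, and $\alpha_w(G-K)<\alpha_w(G)$ requires $K$ to meet \emph{every} maximum-weight set, not just the maximal ones: a clique that meets a maximal maximum-weight set $S$ only in a zero-weight vertex misses the maximum-weight set $S$ minus that vertex. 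The paper closes this hole with a second layer of minimality that your induction lacks: it takes a counterexample with the fewest vertices, proves every proper induced subgraph is strongly t-perfect (via Proposition~\ref{prop:perfect and t-perfect}, the claw-free results, Lemmas~\ref{lem:untangled} and~\ref{lem:fix-5-hole}, or minimality), and then deletes a zero-weight vertex to contradict the minimality of $w(V(G))$. Since you induct on total weight only, for the fixed $G$, you have no access to the statement for $G-v$. The gap also infects your ``weighing'' technique: without positivity its typical output is that certain vertices have weight zero, not a contradiction. For instance, if $\{x,a\}$, $\{x,b\}$ and $\{v_1,v_4\}$ all had maximum weight (with $x\in U_3^+$, $a\in U_2^+$, $b\in U_4^+$), comparing against $\{v_2,v_4\}\cup U_3$ and $\{v_1,v_3\}\cup U_2$ yields only $w(v_2)=w(v_3)=0$, which ends the argument nowhere unless positivity is already known.

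Second, what you defer as a ``finite-but-delicate enumeration'' is not a routine verification; it is the actual content of the paper's proof (its Cases 1--5 with subcases, anchored at the clique $\{v_2,u_2^+,u_4\}$, whose existence requires $U_4\neq\emptyset$ --- a degenerate case the paper must dispatch separately via Gerards' theorem). Moreover, the specific tools you offer are not sound as stated: in the two-pairs configuration above, your triangle $\{v_2,a,b\}$ misses the maximal independent set $\{v_1,v_4\}$, as well as $\{v_3\}\cup U_3^-$ and $\{v_4\}\cup U_4^-$ (the correct choice there is $\{v_1,v_5,x\}$, and the small sets then still have to be excluded by weight comparisons); likewise the triangle inside $U_2^-\cup U_3^-\cup U_4^-$ misses both $\{v_1,v_4\}$ and $\{v_2,v_5\}$. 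So each candidate clique leaves some maximal sets uncovered, and ruling those out is exactly the delicate coexistence analysis you have not carried out --- and, by the first paragraph, cannot complete with the weighing technique until positivity of $w$ is in hand.
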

\begin{proof}
  Suppose for contradiction that $G$ is not strongly t-perfect, and assume without loss of generality that $G$ is a counterexample of the minimum number of vertices.
  Our first claim is that every proper induced subgraph $G'$ of $G$ is strongly t-perfect.  If $G'$ does not contain an odd hole, then it is strongly t-perfect (Proposition~\ref{prop:perfect and t-perfect}).
  If $G'$ is claw-free, then $G$ is strongly t-perfect~\cite{Bruhn10, Bruhn12}.
  Thus, $G'$ is strongly t-perfect either because it satisfies one of Lemmas~\ref{lem:untangled} and \ref{lem:fix-5-hole}, or by the selection of $G$.

  By Proposition~\ref{prop:strong-t-perfection}, there exists a weight function $w:V(G) \rightarrow \mathbb{Z}_{ \geq 0}$ such that $G$ does not have a $w$-cover of cost $\alpha_{w}(G)$.  We may take $w$ to be such a function that minimizes $w(V(G))$.
  The second claim is that the weight is positive.  Suppose that $w(v) = 0$ for some vertex $v \in V(G)$.
  Since every induced subgraph of $G$ is strongly t-perfect, there exists a $w$-cover $\mathcal{K}$ of $G-v$ with cost $\alpha_{w}(G-v)$.  Since $w(v) = 0$, the cover $\mathcal{K}$ is also a $w$-cover of $G$, while $\alpha_{w}(G-v) = \alpha_{w}(G)$.  But then $\mathcal{K}$ is a $w$-cover of $G$ with cost $\alpha_{w}(G)$, a contradiction.
  As a consequence of the second claim, every maximum-weight independent set is maximal.  Recall that all maximal independent sets are listed in Propositions~\ref{prop:independent-set-2} and~\ref{prop:refined-independent-sets}.

  For $j=2,3,4$, let
  \[
    S^-_j = \{v_{j-1}, v_{j+1}\}\cup U^-_j
  \]
  and denote by $u_{j}^{-}$ the only vertex contained in $U_{j}^{-}$ when it is not empty.
  For $j=2,3$, let $u^+_j$ be a vertex of the maximum weight from $U^+_j$, and 
  \[
    S^+_j = \{v_{j-1}, v_{j+1}, u^+_j\}.
  \]
  We define a set $S^+_4 = \{v_{3}, v_{5}, u^+_4\}$ when $u^+_4 \ne \emptyset$, with $u^+_4$ being a vertex of the maximum weight from $U^+_4$. 
  According to Proposition~\ref{prop:independent-set-2}, all the nine sets $S^-_j$, $S^+_j$, and $U_j$ are independent sets.

  From Proposition~\ref{proposition:clique-intersects-mwis} and the selection of the weight function $w$ it can be inferred that $\alpha_w(G - K) = \alpha_w(G)$ for any clique $K$ of at most three vertices.  In other words, there exists a maximum-weight independent set $S$ of $G$ disjoint from $K$.
  We try to locate a clique of two or three vertices that intersects all maximum-weight independent sets of the graph, thereby producing a contradiction to Proposition~\ref{proposition:clique-intersects-mwis}.
In the following we consider potential maximum-weight independent sets.  By excluding an independent set we mean that we have evidence that it does not have the maximum weight.

  Note that $U_{4}$ is not empty; otherwise, every odd cycle of $G$ visits $v_{5}$, and $G$ is strongly t-perfect according to Gerards~\cite{Gerards89}.
  We take an arbitrary vertex $u_4$ from $U_{4}$.  
  It is adjacent to $u_2^+$ by Proposition~\ref{prop:fork-free-results}(\ref{lb:fork-free-i}) with $i=2$.
  Let $K$ denote the clique $\{v_{2}, u_2^+, u_4\}$, and let $S$ be a maximum-weight independent set of $G$ disjoint from $K$.
  Note that $S$ has to be $\{v_{1}, v_{4}\}$, $\{v_{3}\} \cup U_{3}^{-}$, $\{v_{4}\} \cup U_{4}^{-}$, or one that is disjoint from $V(H)$, i.e., specified in Proposition~\ref{prop:refined-independent-sets}(ii).
  
  \begin{itemize}
  \item Case 1, $S = \{v_{1}, v_{4}\}$.  (Note that $U_5 = \emptyset$.)
    Since $\{v_2, v_4, u^+_3\}$ and $\{v_1, v_3, u^+_2\}$ are both independent sets,
    
    \begin{align*}
      \label{eq:1}
      &w(u^+_2) + w(u^+_3)
      \\ < &
             w(v_2) + w(v_4) + w(u^+_3) + w(v_1) + w(v_3) + w(u^+_2) - w(v_4) - w(v_1)
      \\ = &
             w(\{v_2, v_4, u^+_3\}) + w(\{v_1, v_3, u^+_2\}) - w(S)
      \\ \leq & \alpha_w(G) + \alpha_w(G) - \alpha_w(G)
      \\ = & \alpha_w(G).
    \end{align*}

    By the selection of $u^+_2$ and $u^+_3$, a pair of vertices $x\in U_{2}^+$ and $y\in U_{3}^+$ cannot have weight $\alpha_w(G)$.  In other words, if a maximum-weight independent set is disjoint from $V(H)$, then it comprises a vertex in $U_{3}^+$ and a vertex in $U_{4}^+$.  On the other hand, from
    \[
      w(v_2) + w(v_3) + w(U^-_2\cup U^-_3) = w(S^-_2) + w(S^-_3) - w(S) \le \alpha_w(G)
    \]
    we can exclude $\{v_{2}\} \cup U_{2}^{-}$ and $\{v_{3}\} \cup U_{3}^{-}$.
    Thus, if a maximum-weight independent set contains one vertex from $H$, then it has to be $\{v_{4}\} \cup U_{4}^{-}$.
    \begin{itemize}
    \item Case 1.1,  $\{v_{2}, v_{5}\}$ is also a maximum-weight independent set.  (Note that $U_1 = \emptyset$.) 
      If $U_{4}^{+}\ne \emptyset$, we use
      \[
        w(u^+_3) + w(u^+_4) < w(S^+_3) + w(S^+_4) - w(\{v_{2}, v_{5}\}) \le \alpha_w(G)
      \]
      to exclude all maximal independent sets disjoint from $H$.
      If $U_{4}^{-}\ne \emptyset$, we use
      \[
        w(v_4) + w(u_{4}^{-}) < w(S^-_3) + w(S^-_4) - w(\{v_{2}, v_{5}\}) < \alpha_w(G)
      \]
      to exclude $\{v_{4}, u_{4}^{-}\}$.  Since any maximum-weight independent set has to contain two vertices from $H$, they all intersect the clique $\{v_{1}, v_{5}, u_{3}^{+}\}$.
    \item Case 1.2, there exists a maximum-weight independent set $S' = \{x_{3}, x_{4}\}$ with $x_{3}\in U_{3}^+$ and $x_{4}\in U_{4}^+$.
      Note that both $U_{3} \cup \{v_{2}\}$ and $U_{4} \cup \{v_{3}\}$ are not maximal.
      Therefore, we can use $w(U_{3} \cup \{v_{2}\}) + w(U_4 \cup \{v_{3}\}) - w(S') < \alpha_w(G)$ to exclude all other pairs $\{x'_3, x'_4\}$ with $x'_{3}\in U_{3}^+$ and $x'_{4}\in U_{4}^+$ (except for $S'$ itself).
      If $U_{4}^{-}$ is empty, then $\{v_{1},v_{2},x_{4}\}$ intersects all the possible maximum-weight independent sets.
      Now that $U_{4}^{-}$ is nonempty, we use $w(U_{4}) + w(S^+_3) - w(S') < \alpha_w(G)$ to exclude $\{v_{4}, u_{4}^{-}\}$.
      Furthermore, we use $w(S^+_3) + w(S^+_4) - w(S') < \alpha_w(G)$ to exclude $\{v_{2}, v_{5}\}$.
 The clique $\{v_{1}, v_{5}, u_{3}^{+}\}$ intersects all the remaining maximal independent sets, $S$, $S'$, $\{v_{2},v_{4}\} \cup U_{3}$, $\{v_{3},v_{5}\} \cup U_{4}$, and $\{v_{1},v_{3}\} \cup U_{2}$.
    \item Otherwise (neither of cases 1.1 and 1.2 is true),
      the clique $\{v_{3}, v_{4}\}$ intersects all the possible maximum-weight independent sets.
    \end{itemize}
    
  \item Case 2, $S = \{v_{3}, u_{3}^{-}\}$.
    We use
    \[
      w(u_{2}^{+}) + w(u_{3}^{+}) < w(U_{3}) + w(S^+_2) - w(S) < \alpha_w(G)
    \]
    to exclude all pairs $\{x_2, x_3\}$ with $x_{2}\in U_{2}^+$ and $x_{3}\in U_{3}^+$.
    If $U_{4}^{+}$ is nonempty, we use
    \[
      w(u_{3}^{+}) + w(u_{4}^{+}) < w(U_3) + w(S^+_4) - w(S) < \alpha_w(G)
    \]
    to exclude all pairs $\{x_3, x_4\}$ with $x_{3}\in U_{3}^+$ and $x_{4}\in U_{4}^+$.
    From $w(S^-_3) + w(S^-_4) - w(S) < \alpha_w(G)$ we can exclude $\{v_{2}, v_{5}\}$ and $\{v_{4}, u_{4}^{-}\}$ (when $U^-_4 \ne \emptyset$).
    If $U^-_2 \ne \emptyset$, we use  $w(S^-_2) + w(S^-_3) - w(S) < \alpha_w(G)$ to exclude $\{v_{2}, u_{2}^{-}\}$.   We are left with $S$, $\{v_{2},v_{4}\} \cup U_{3}$, $\{v_{3},v_{5}\} \cup U_{4}$, and $\{v_{3},v_{1}\} \cup U_{2}$.   All of them intersect the clique $\{v_{3}, v_{4}\}$.
    
  \item Case 3, $S = \{v_{4}, u_{4}^{-}\}$.
    We can use
    \[
      w(v_{2}) + w(v_{5}) < w(S^-_3) + w(S^-_4) - w(S) \le \alpha_w(G)
    \]
    to exclude $\{v_{2}, v_{5}\}$.
    If $U_{4}^{+}\ne \emptyset$, we use
    \[
      w(u_{3}^{+}) + w(u_{4}^{+}) < w(U_{4}) + w(S^+_3) - w(S) < \alpha_w(G)
    \]
    to exclude all pairs $\{x_3, x_4\}$ with $x_{3}\in U_{3}^+$ and $x_{4}\in U_{4}^+$.
    \begin{itemize}
    \item Case 3.1, there does not exist a maximum-weight independent set $\{x_2, x_3\}$ with $x_{2}\in U_{2}^+$ and $x_{3}\in U_{3}^+$.
      If $U_{2}^{-}$ is empty, the clique $\{v_{3},v_{4}\}$ intersects all maximum weight independent sets.
      Now that $U_{2}^{-}\ne \emptyset$, we note that $\{u_{2}^{-}, u_{3}^{+}, u_{4}^{-}\}$ intersects all maximum-weight independent sets.  To see that it is clique, note that $u_{2}^{-}$ is adjacent to $u_{4}^{-}$ by Proposition~\ref{prop:fork-free-results}(\ref{lb:fork-free-i}) with $i=2$, and $u_{3}^{+}$ is adjacent to both $u_{2}^{-}$ and $u_{4}^{-}$ by Proposition~\ref{prop:fork-free-results}(\ref{lb:fork-free-ii}) with $i=3$,
    \item Case 3.2, there exists a maximum-weight independent set $S' = \{x_2, x_3\}$ with $x_{2}\in U_{2}^+$ and $x_{3}\in U_{3}^+$.
 We can use $w(U_{2} \cup \{v_{1}\}) + w(U_3 \cup \{v_{2}\}) - w(S') < \alpha_w(G)$ to exclude all other pairs $\{x'_2, x'_3\}$ with $x'_{2}\in U_{2}^+$ and $x'_{3}\in U_{3}^+$ (except for $S'$ itself).
 If $U_{2}^{-}$ is not empty, then we can use $w(U_{2}) + w(S^+_3) - w(S') < \alpha_w(G)$ to exclude $\{u_{2}^{-},v_{2}\}$.  Thus, a maximum-weight independent set of $G$ has to be $S$, $S'$, $\{v_{2},v_{4}\} \cup U_{3}$, $\{v_{3},v_{5}\} \cup U_{4}$, or $\{v_{3},v_{1}\} \cup U_{2}$.  The clique $\{v_{4}, v_{5}, x_{2}\}$ intersects all maximum-weight independent sets.
    \end{itemize}    
    
  \item Case 4, $S = \{x_{2}, x_{3}\}$, where $x_{2}\in U_{2}^+$ and $x_{3}\in U_{3}^+$.  Note that $x_{2} \neq u_{2}^+$ because $S$ is disjoint from $K$.
 We can use $w(U_{2} \cup \{v_{1}\}) + w(U_3 \cup \{v_{2}\}) - w(S) < \alpha_w(G)$ to exclude all other pairs $\{x'_2, x'_3\}$ with $x'_{2}\in U_{2}^+$ and $x'_{3}\in U_{3}^+$ (except for $S$ itself).
 If $U_{2}^{-}$ is nonempty, then we can use $w(U_{2}) + w(S^+_3) - w(S) < \alpha_w(G)$ to exclude $\{u_{2}^{-},v_{2}\}$.
 If no maximum-weight independent set intersects $U_{4}^+$, then the clique $\{x_{2},v_{4},v_{5}\}$ intersects all maximum weight independent sets.

 Suppose that $S' = \{x'_{3}, x_{4}\}$ is a maximum-weight independent set with $x'_{3} \in x_{3}$ and $x_{4} \in U_{4}^{+}$.  We can further use $w(U_{3} \cup \{v_{4}\}) + w(U_4 \cup \{v_{3}\}) - w(S') < \alpha_w(G)$ to exclude all other pairs $\{x''_3, x'_4\}$ with $x''_{3}\in U_{3}^+$ and $x'_{4}\in U_{4}^+$ (except for $S'$ itself).
 We use $w(S^+_3) + w(S^+_4) - w(S') \le \alpha_w(G)$ to exclude $\{v_{2},v_{5}\}$.  Thus, a maximum-weight independent set of $G$ has to be $S$, $S'$, $\{v_{2},v_{4}\} \cup U_{3}$, $\{v_{3},v_{5}\} \cup U_{4}$, or $\{v_{3},v_{1}\} \cup U_{2}$.  The set $\{v_{4},x_{2},x_{4}\}$ intersects all maximum-weight independent sets.  Note that $x_{2}$ is adjacent to $x_{4}$ by Proposition~\ref{prop:fork-free-results}(\ref{lb:fork-free-i}) with $i=2$.
    
  \item Case 5, $S = \{x_{3}, x_{4}\}$, where $x_{3}\in U_{3}^+$ and $x_{4}\in U_{4}^+$.  It is similar to Case 4.
  \end{itemize}
  This concludes this proof.
\end{proof}

Lemma~\ref{lem:main} follows from Lemmas~\ref{lem:untangled}, \ref{lem:fix-5-hole}, and \ref{lem:sufficient-condition-for-t-perfection}.

\section{Recognition}

We now describe an algorithm to decide whether a fork-free graph is (strongly) t-perfect.
We may assume without loss of generality that the input graph is connected; otherwise, we work on its components one by one.
The algorithm is based on Lemma~\ref{lem:main}.  The only condition of Lemma~\ref{lem:main}(ii) that cannot be easily checked in polynomial time is that every odd hole has length five.
The following proposition bounds the length of the longest odd holes.

\begin{proposition}
  \label{lem:bound}
  Let $G$ be a $\{K_{4}, \mathrm{fork}\}$-free graph containing a five-hole $H$.  If $H$ satisfies \myref{}, then $G$ cannot contain an odd hole with length longer than $19$.
\end{proposition}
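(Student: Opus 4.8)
The plan is to derive a contradiction from the existence of an odd hole $H'$ with $|V(H')|$ large, by playing the five vertices of $H$ off against $H'$. The backbone is the double-counting identity
\[
  \sum_{j=1}^{5} |N(v_j)\cap V(H')| \;=\; \sum_{w\in V(H')} |N(w)\cap V(H)|,
\]
where both sides count the edges between $V(H)$ and $V(H')$. By \myref{}, every $w\in V(H')\setminus V(H)$ has at least two neighbours on $H$, and every $w\in V(H')\cap V(H)$ trivially has its two hole-neighbours on $H$; hence the right-hand side is at least $2|V(H')|$. It therefore suffices to bound each term on the left by an absolute constant: if $|N(v_j)\cap V(H')|\le c$ for every $j$, then $2|V(H')|\le 5c$, and a suitable $c$ forces $|V(H')|\le 19$.

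The key step is to bound $|N(v_j)\cap V(H')|$ for a vertex $v_j\notin V(H')$ that is not adjacent to all of $V(H')$. I would write $N(v_j)\cap V(H')$ as a union of maximal \emph{arcs} of $H'$, each arc being a block of consecutive hole-vertices flanked by non-neighbours of $v_j$, and show, using only fork-freeness, that the arcs are short and few. An arc $w_a w_{a+1}\cdots$ of length at least five already yields a fork with centre $v_j$, claw-leaves $w_a, w_{a+2}, w_{a+4}$ (pairwise non-adjacent, since $H'$ is an induced cycle) and pendant the flanking non-neighbour $w_{a-1}$ attached to $w_a$. Likewise, three distinct arcs, or one arc of length at least three together with a neighbour lying in another arc, each produce such a fork. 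Consequently $v_j$ has at most two arcs of total length at most four, so $|N(v_j)\cap V(H')|\le 4$; and for $v_j\in V(H')$ the count is exactly two, namely its two neighbours along the cycle $H'$. Combining, every term on the left is at most four, giving $2|V(H')|\le 20$ and $|V(H')|\le 19$. (The clean count actually yields a smaller bound; nineteen is a convenient and comfortable value.)

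The one situation the arc argument does not cover is a probe $v_j$ that is \emph{universal} to $V(H')$, since then there is no flanking non-neighbour and the term equals $|V(H')|$; this is where I expect the main work. Here I would use that, under \myref{}, every vertex outside $H$ falls into a unique part $U_i$, so $v_j$'s neighbourhood meets only $U_{j-2}\cup U_{j+2}\cup U_j^+$. Universality thus confines $V(H')$ to these three parts, and each of them is in turn controlled by a \emph{different}, provably non-universal probe: $v_{j+1}$ dominates $U_{j-2}$, $v_{j-1}$ dominates $U_{j+2}$, and $v_{j+2}$ dominates $U_j$, so the arc bound applies to those probes and caps each part at four vertices. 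A secondary nuisance is the overlap $V(H)\cap V(H')$, which removes some probes from the sum and slightly weakens the lower bound; this is a finite case check, and together with the universal case it is the reason a safe constant such as nineteen is claimed rather than the sharper value the generic count suggests. Throughout, the only tools needed are \myref{}, fork-freeness, and the fork/$K_4$ analysis already packaged in Proposition~\ref{prop:neighbors-on-hole}.
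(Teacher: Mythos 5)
Your proposal is correct in substance but takes a genuinely different route from the paper. The paper probes from the side of $H'$: by pigeonhole some part, say $U_1$, contains at least $\lceil(|H'|-4)/5\rceil\ge 4$ vertices of $H'$; since $U_1$ is independent it cannot contain all of $H'$, so there is a vertex $x\in V(H')\setminus(V(H)\cup U_1)$, and the inter-part adjacencies of Proposition~\ref{prop:fork-free-results} force $x$ to be adjacent to all but at most one vertex of $U_1\cap V(H')$, hence to have at least three neighbours on $H'$, contradicting that $H'$ is induced. You instead probe from the side of $H$, double-counting the $H$--$H'$ edges: \myref{} gives the lower bound $2|V(H')|$, and your fork-based arc analysis caps each term $|N(v_j)\cap V(H')|$ at four. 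What your route buys is self-containment and a sharper constant: it bypasses Proposition~\ref{prop:fork-free-results} entirely (using only \myref{}, fork-freeness, and the definition of the parts $U_i$), and it yields $|V(H')|\le 10$ in the generic case and roughly $14$ in the universal case. What it costs is exactly that universal case, which simply cannot arise in the paper's argument because the paper's probe $x$ lies on $H'$ and so automatically has only two neighbours there; your probes lie on $H$, so a vertex of $H$ complete to $V(H')$ must be excluded by hand. Two claims you assert without proof are both true and fillable: the pendant in your two-arc and three-arc fork constructions must occasionally be re-chosen when the flanking non-neighbour happens to be a cycle-neighbour of a leaf in another arc (always possible once $|H'|\ge 21$), and ``provably non-universal'' holds because if two vertices of $H$ were both complete to $V(H')$, then $V(H')$ would be confined to a single part $U_i$, or to $U_j^+\cup\{v_{j+1}\}$, i.e.\ to an independent set possibly plus one star centre, which cannot contain a cycle. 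Finally, note the arithmetic slip at the end of your generic case: $2|V(H')|\le 20$ gives $|V(H')|\le 10$, not $19$; this is harmless, since any bound below $21$ proves the proposition.
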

\begin{proof}
  Let $H'$ be a longest odd hole in $G$.  Suppose for contradiction $|H'| \ge 21$.  At least $|H'|-4$ vertices of $H'$ are in $V(G) \setminus V(H)$.
  Assume without loss of generality that $|U_{i} \cap V(H')|$ is maximized with $i = 1$.
  Then $|U_{1} \cap V(H')| \geq \lceil \frac{|H'|-4}{5} \rceil \geq 4$.
  Since $U_{1}$ is an independent set, $|U_{1} \cap V(H')| \leq \frac{|H'|-1}{2}$.  There exists a vertex $x$ in $V(H') \setminus (V(H)\cup U_{1})$.  By Propositions~\ref{prop:fork-free-results}(\ref{lb:fork-free-i}), (\ref{lb:fork-free-ii}), and (\ref{lb:fork-free-iv}) with $i=1$, the vertex $x$ has at most one non-neighbor in $U_{1} \cap V(H')$.  But then $x$ has at least three neighbors in $|H'|$, contradicting that $H'$ is a hole.
\end{proof}

We are now ready to present the recognition algorithm and prove Theorem~\ref{thm:fork-free-recognize}.

\begin{proof}[Proof of Theorem~\ref{thm:fork-free-recognize}]
  The input is a fork-free graph $G$.
  We start by checking whether it contains a $K_4$, $W_{5}$, $C^2_7$, or $C_{10}^{2}$.
  Since $K_4$, $W_{5}$, $C^2_7$, and $C_{10}^{2}$ are not t-perfect, we return  ``no'' if any of them is found.
  If $G$ does not contain a claw, then we call Bruhn--Schaudt algorithm~\cite{Bruhn16} to decide whether $G$ is t-perfect.
  We then call the algorithm of Chudnovsky et al.~\cite{Chudnovsky20} to test whether $G$ contains an odd hole.
  Since $G$ does not contain $K_4$, it cannot contain the complement of any odd hole longer than seven.  It does not contain $C^2_7$, which is the complement of $C_7$.  Therefore, if $G$ does not contain any odd hole, then $G$ is perfect, and t-perfect (Proposition~\ref{prop:perfect and t-perfect}), and we return ``yes.''
  In the rest, $G$ contains a claw and an odd hole, and   we check the conditions of Lemma~\ref{lem:main}(ii).
  We enumerate five-holes, and for each of them, test whether it satisfies \myref{}.  If any one does not, then return ``no.''
  Finally, we check whether $G$ contains an odd hole of length between $7$ and $19$.  If any is found, then we  can return ``no.''  If none is found, every odd hole has length five by Proposition~\ref{lem:bound}.  Thus, we can return ``yes.''
  All the induced subgraphs we need to check have a constant number of vertices, and both algorithms we call~\cite{Bruhn16, Chudnovsky20} take polynomial time.
  Thus, the whole algorithm runs in polynomial time.
\end{proof}

\end{document}